\newcommand{\setcomp}[2]{\left\{ #1 \,\middle|\, #2 \right\}}
\newcommand{\norm}[1]{\left\lVert#1\right\rVert}
\newcommand{\abs}[1]{\left\lvert#1\right\rvert}
\DeclareMathOperator*{\argmin}{arg\,min\:}
\newcommand{\eps}{\varepsilon}
\newtheorem{theorem}{Theorem}
\newtheorem{lemma}[theorem]{Lemma}
\newtheorem{corollary}[theorem]{Corollary}
\theoremstyle{definition}
\newtheorem{definition}{Definition}
\newcolumntype{d}{D{?}{\,\pm\,}{-1}}
\newenvironment{subproof}[1][\proofname]{%
\begin{proof}[#1]%
}{%
\end{proof}%
}
\ificcvfinal\pagestyle{empty}\fi
\begin{document}

\title{Extensions of Karger's Algorithm: Why They Fail in Theory and How They Are Useful in Practice}

\author{
Erik Jenner \qquad Enrique Fita Sanmartín \qquad Fred A. Hamprecht \\
  Heidelberg Collaboratory for Image Processing\\
  University of Heidelberg, Germany\\
  {\tt\small erik@ejenner.com, \{enrique.fita.sanmartin,fred.hamprecht\}@iwr.uni-heidelberg.de}
}

\maketitle

\begin{textblock*}{\textwidth}(2cm,26.5cm)
  \footnotesize
© 2021 IEEE.  Personal use of this material is permitted.  Permission from IEEE must be obtained for all other uses, in any current or future media, including reprinting/republishing this material for advertising or promotional purposes, creating new collective works, for resale or redistribution to servers or lists, or reuse of any copyrighted component of this work in other works.
\end{textblock*}

\ificcvfinal\thispagestyle{empty}\fi

\begin{abstract}
  The minimum graph cut and minimum \(s\)-\(t\)-cut problems are important
  primitives in the modeling of combinatorial problems in computer science,
  including in computer vision and machine learning. Some of the most efficient
  algorithms for finding global minimum cuts are randomized algorithms based on
  Karger’s groundbreaking contraction algorithm. Here, we study whether Karger’s
  algorithm can be successfully generalized to other cut problems. We first
  prove that a wide class of natural generalizations of Karger’s algorithm
  cannot efficiently solve the \(s\)-\(t\)-mincut or the normalized cut problem
  to optimality. However, we then present a simple new algorithm for seeded
  segmentation / graph-based semi-supervised learning that is closely based on
  Karger's original algorithm, showing that for these problems, extensions of
  Karger's algorithm can be useful. The new algorithm has linear asymptotic
  runtime and yields a potential that can be interpreted as the posterior
  probability of a sample belonging to a given seed / class. We clarify its
  relation to the random walker algorithm / harmonic energy minimization in
  terms of distributions over spanning forests. On classical problems from
  seeded image segmentation and graph-based semi-supervised learning on image
  data, the method performs at least as well as the random walker / harmonic
  energy minimization / Gaussian processes.
\end{abstract}
\section{Introduction}
Minimum graph cuts have been applied to machine learning problems for a long
time. They have been used in natural language
processing~\cite{pangSentimentalEducationSentiment2004} and especially in
computer vision, for example in
segmentation~\cite{wuOptimalGraphTheoretic1993,rotherGrabCutInteractiveForeground2004,boykov2001},
restoration~\cite{greigExactMaximumPosteriori1989}, and energy minimization more
generally~\cite{kolmogorovWhatEnergyFunctions2004}. Nowadays, they still form an
important part of many deep learning
pipelines, for example for
segmentation~\cite{xu2016,lu2017,mukherjee2017,li2019,liu2019},
image classification~\cite{nardelli2018},
and recently also neural style transfer~\cite{zhang2019a}.

For finding global minimum cuts (defined together with all other terminology in
Section~\ref{sec:background}), Karger's contraction
algorithm~\cite{kargerGlobalMincutsRNC1993,kargerNewApproachMinimum1996}
started a wave of randomized algorithms solving this problem efficiently~\cite{
  kargerMinimumCutsNearLinear1998,gawrychowski2019,ghaffariFasterAlgorithmsEdge2019,lovett2020}.

Thanks to these randomized algorithms,  global mincuts can, somewhat surprisingly, be found
more efficiently than \(s\)-\(t\)-mincuts. An interesting question is therefore
to what extent randomized algorithms can be applied to other graph cut problems,
and in particular whether Karger's algorithm can be fruitfully extended.
An especially important cut problem are \(s\)-\(t\)-mincuts. While
\emph{approximating} them is possible in nearly linear time in
the number of edges~\cite{kelner2013,sherman2013} and has also been studied
using randomized algorithms based on graph sparsification~\cite{benczurApproximatingStMinimum1996},
it is to our knowledge still an open question whether Karger's
algorithm can be modified to efficiently find \(s\)-\(t\)-mincuts.

In Section~\ref{sec:impossibility}, we give a definitive answer to this question by proving that a large class
of extensions of Karger's contraction algorithm can in general not exactly solve the \(s\)-\(t\)-mincut problem
efficiently. Our result also applies to the normalized cut
problem~\cite{jianboshiNormalizedCutsImage2000}, which, like the
\(s\)-\(t\)-mincut, plays an important role in image segmentation.

However, extensions of Karger's algorithm can still be useful if applied in the
right way. In Section~\ref{sec:seeded_segmentation}, we show how a
straightforward extension of Karger's algorithm can be used successfully
for seeded segmentation / semi-supervised learning tasks. We interpret this
extension as a forest sampling method and observe its similarities to the random
walker algorithm~\cite{gradyRandomWalksImage2006} for seeded graph segmentation.
In semi-supervised learning, the same algorithm is known as harmonic energy
minimization~\cite{zhu2003} or Gaussian Processes, so the same observations
apply.

The main contribution of this paper is purely conceptual. Still, in Section~\ref{sec:experiments}
we show in two classical experiments that the proposed algorithm compares well
against the random walker / harmonic energy minimization, perhaps the most
influential algorithm in seeded segmentation / semi-supervised learning
to date.
Since our method has an asymptotic time complexity of only \(\mathcal{O}(m)\) on a graph with \(m\) edges,
it can be seen as an efficient alternative to the random walker algorithm / harmonic energy minimization,
while also giving a probabilistic output.

\paragraph{Related work} The most closely related work is the \emph{typical cut}
algorithm~\cite{gdalyahuSelforganizationVisionStochastic2001},
which uses an ensemble of cuts generated by Karger's algorithm for clustering without seeds. In contrast, the method described
here uses cuts generated by a slight variation of Karger's algorithm to solve \emph{seeded} segmentation problems.
In this setting, there is a very natural way to get a segmentation from the ensemble of cuts, as well as a natural stopping
point for the contraction, which for the typical cut is a free parameter.

\section{Background}\label{sec:background}
All graphs considered in this paper are undirected and connected and have non-negative edge weights.
We write such a graph as a tuple \(G = (V, E, w)\) of a set of vertices \(V\), an edge set \(E\)
and a weight function \(w: E \to \mathbb{R}_{\geq 0}\). We denote the number of
vertices by \(n := \abs{V}\) and the number of edges by \(m := \abs{E}\).
We also write \(w_e\) for the weight \(w(e)\) of an edge \(e \in E\) and \(w_{uv}\) for the weight
of the edge between vertices \(u, v \in V\). If no edge is present, \(w_{uv}\) is defined
as zero.
\(w(A, B) := \sum_{a \in A, b \in B} w_{ab}\) is the sum of edge weights connecting two
subsets \(A, B \subset V\).

A \emph{graph cut} is a partition of the vertices \(V\) of a graph into
two disjoint non-empty subsets \(A\) and \(B\) such that \(V = A \cup B\).
The \emph{cut set} of such a cut is the set of all edges with one endpoint
in \(A\) and one in \(B\). The sum \(w(A, B)\) of the weights of all edges in the
cut set is called the \emph{weight} or \emph{cost} of the graph cut.

We will describe three different cut problems here: the global minimum cut,
the \(s\)-\(t\)-minimum cut and the normalized cut.
\paragraph{A \emph{(global) minimum cut}} of a graph -- or \emph{mincut} for short -- is a cut with minimal cost.
In other words, the minimum cut problem is given by
\begin{equation}
  \argmin_{\text{partitions } (A, B) \text{ of } V} w(A, B)\,.
\end{equation}

\paragraph{An \emph{\(s\)-\(t\)-cut}} of a graph \(G\) is a graph cut that separates two given
vertices \(s \neq t \in V\).
In the \emph{\(s\)-\(t\)-mincut} problem, the goal is to find an \(s\)-\(t\)-cut
with minimal cost, i.e.
\begin{equation}
  \argmin_{\text{partitions } (S, T)} w(S, T) \quad \text{such that } s \in S, t \in T\,.
\end{equation}
For convenience, we define an \emph{\(s\)-\(t\)-graph} as a tuple \((G, s, t)\)
of a graph and two vertices \(s \neq t \in V.\)

We also mention here the notion of \emph{\(\alpha\)-minimal cuts}. A global cut \((A, B)\) is \(\alpha\)-minimal
if its cost is within a factor \(\alpha\) of the global minimum cut,
\begin{equation}
  w(A, B) \leq \alpha \min_{\substack{\text{partitions }\\ (A', B')}} w(A', B')\,,
\end{equation}
where \(\alpha\) is some positive real number \(> 1\). The same concept can of course be applied to
define an \(\alpha\)-minimal \(s\)-\(t\)-cut as an \(s\)-\(t\)-cut that has a cost
within a factor \(\alpha\) of the \(s\)-\(t\)-mincut.

\paragraph{The \emph{normalized cut}}~\cite{jianboshiNormalizedCutsImage2000}
generates more balanced cuts than the minimum cut objective, which makes it
particularly well suited for image segmentation.
It minimizes
\begin{equation}
\operatorname{ncut}(A, B) := \frac{w(A, B)}{w(A, V)} + \frac{w(A, B)}{w(B, V)}
\end{equation}
over the partitions \((A, B)\) of the graph.
Note that since \(w(A, V) \overset{\text{(def)}}{=} \sum_{a \in A, v \in V} w_{av}\), this term counts the internal
weights of \(A\) twice.
Solving the normalized cut problem exactly is NP-complete, but the solution
can be approximated with a spectral method~\cite{jianboshiNormalizedCutsImage2000}.

\subsection{Karger's contraction algorithm}
Karger's algorithm is a Monte Carlo algorithm for finding global minimum graph cuts,
meaning that it has a fixed runtime but is not guaranteed to find the best cut.
It is based on \emph{contractions} of edges in a graph. Given a graph \(G = (V, E, w)\)
and two vertices \(v_1, v_2 \in V\), the contracted graph \(G/\{v_1, v_2\}\) is obtained as follows:
\begin{enumerate}
\item \(v_1\) and \(v_2\) with all their edges are removed and a new vertex \(u\) is added.
\item For each edge \(\{v_i, x\} \in E\) with \(x \notin \{v_1, v_2\}\), a new edge \(\{u, x\}\) with the same weight
is added, for \(i = 1, 2\).
\item If \(u\) now has several edges to the same vertex, they are merged into one by adding their weights.
\end{enumerate}

Karger's algorithm simply repeatedly chooses an edge at random and contracts it
until only two vertices remain. The remaining edges then define a cut set. Each
edge is chosen for contraction with probability proportional to its weight.
The precise algorithm is described in algorithm~\ref{alg:karger}.

\begin{algorithm}
\SetKwInOut{Input}{Input}
\SetKwInOut{Output}{Output}
\Input{graph \(G\)}
\Output{contracted graph with 2 vertices}
\While{\(G\) has more than 2 vertices}{
    choose an edge \(\{u, v\}\) with probability proportional to its weight\;
    \(G \gets G/\{u, v\}\)\;
}
\Return{\(G\)}\;
\caption{Karger's contraction algorithm}\label{alg:karger}
\end{algorithm}
Of course, this algorithm does not always produce a minimum cut. To increase the
success probability, the algorithm is run several times and the best
cut is returned. This can be sped up by sharing computations between
runs~\cite{kargerN2AlgorithmMinimum1993,kargerNewApproachMinimum1996}
but doing so does not affect any of the arguments in this paper, so we will ignore it.

The reason why Karger's algorithm is useful for finding minimum cuts is the following theorem, which says that
-- compared to the success probability of \(2^{-n + 1}\) that uniform sampling of cuts would give --
Karger's algorithm finds a minimum cut with relatively high probability
on a single run. This means that a polynomial number of runs is enough to
find a minimum cut with high probability.
\begin{theorem}[\cite{kargerGlobalMincutsRNC1993}]\label{thm:karger_probability_bound}
The probability of finding any given mincut with Karger's algorithm
is at least \({n \choose 2}^{-1}\).
\end{theorem}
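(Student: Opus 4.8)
The plan is to fix one minimum cut \(C\) of \(G\), let \(c\) denote its cost (equivalently, the global mincut value), and observe that Karger's algorithm outputs exactly \(C\) if and only if it never selects for contraction an edge of the cut set of \(C\). So I would estimate, step by step, the probability that a cut-set edge is contracted, and multiply the per-step survival probabilities.

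The first ingredient is a monotonicity observation: if we contract an edge that is \emph{not} in the cut set of \(C\), then \(C\) descends to a cut of the contracted graph with the same cost \(c\); more generally, every cut of a contracted graph pulls back to a cut of the original graph of equal cost, so the mincut value of any graph produced during the algorithm is still at least \(c\) (contraction can only merge vertices, hence only shrink the set of available cuts). In particular, in any such graph on \(i\) vertices, for every vertex \(v\) the cut \(\bigl(\{v\},\, V\setminus\{v\}\bigr)\) has cost \(\ge c\), i.e.\ the weighted degree of \(v\) is \(\ge c\). Summing weighted degrees over all \(i\) vertices counts each edge twice, so the total edge weight at that stage is \(W_i \ge ic/2\).

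The second ingredient is the per-step bound. Conditioned on no cut-set edge of \(C\) having been contracted yet, when \(i\) vertices remain the probability of now contracting a cut-set edge equals \(c / W_i \le c/(ic/2) = 2/i\), because edges are chosen with probability proportional to weight and the total weight in the cut set of \(C\) is still exactly \(c\). Hence the conditional probability of ``surviving'' this contraction is at least \(1 - 2/i\). Since this holds uniformly over the history, I can chain these bounds over the \(n-2\) contractions, which occur when \(i = n, n-1, \dots, 3\) vertices remain, to get
\[
\Pr[\text{output }C] \;\ge\; \prod_{i=3}^{n}\left(1 - \frac{2}{i}\right) \;=\; \prod_{i=3}^{n}\frac{i-2}{i} \;=\; \frac{2}{n(n-1)} \;=\; \binom{n}{2}^{-1}.
\]

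I expect the only real content to be the monotonicity / weighted-degree step — that the mincut value never drops below \(c\) under contraction, equivalently that every surviving vertex keeps weighted degree at least \(c\) — since the rest is the telescoping product above and a routine check that the conditioning is legitimate. Weighted (real-valued) edges cause no difficulty: both the degree bound and the proportional-sampling estimate use only sums of weights, never a count of unit edges, so the argument is identical to the unweighted case.
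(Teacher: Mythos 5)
Your proof is correct and follows exactly the standard argument that the paper sketches as the ``key idea'' (the singleton-cut degree bound giving total weight at least \(ic/2\) at each stage, hence a per-step survival probability of at least \(1 - 2/i\), telescoping to \(\binom{n}{2}^{-1}\)); the paper itself only cites Karger and does not spell out the full chain of conditional bounds, which you supply correctly, including the observation that the mincut value is non-decreasing under contraction.
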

The key idea of the proof is that the cost of a global minimum cut is only a small fraction
of the sum of all edge weights because it is always possible to cut out only the vertex
with the lowest degree, which gives an upper bound of \(\frac{2}{n} \sum_{e \in E}w_e\)
for the cost of any minimum cut. So because the contraction probabilities are
proportional to the edge weights, it is -- at least initially -- unlikely that an edge
which is part of a minimum cut set will be contracted.

We will show that an analog of Theorem~\ref{thm:karger_probability_bound}
does not exist for \(s\)-\(t\)-mincuts or normalized cuts, even for a wide class of extensions
of Karger's algorithm. These algorithms would need to be run an exponential
number of times in some cases to obtain a high success probability.

\subsection{Random walker / harmonic energy minimization}
Both global minimum cuts and the normalized cut problem are unsupervised approaches
to clustering: they take only a graph as input, without any annotations.

In contrast, in the seeded segmentation / semi-supervised learning problem, labels are given for
some vertices, the \emph{seeds}. The goal is to assign fitting labels to the
remaining vertices.
This problem can occur in different contexts: in image segmentation,
each vertex corresponds to a pixel, while in graph-based semi-supervised learning,
each vertex represents one sample and the seeds are the labeled samples.

One method for solving the seeded segmentation problem is the
\emph{random walker algorithm}~\cite{gradyRandomWalksImage2006},
also known as \emph{harmonic energy minimization}~\cite{zhu2003} or Gaussian Processes
in graph-based semi-supervised learning.
To choose a label for some vertex \(v\), it imagines a random walker on the graph starting
on \(v\). This random walker chooses an edge to traverse with probability proportional
to the edge weight at each step. It stops once it reaches one of the seeds.
We write \(p_{\text{rw}}(v \sim l)\) for the probability that the random
walker reaches a seed with label \(l\) when starting from \(v\), which we also call
the \emph{random walker potential}. Each vertex is assigned
to the label for which this probability is highest.

Actually simulating such a random walker for each vertex would be intractable.
But the probabilities \(p_{\text{rw}}(v \sim l)\) can be calculated by solving a linear system containing the Laplacian
of the graph~\cite{gradyRandomWalksImage2006,zhu2003}. This means finding an
approximate solution is possible in nearly-linear time in the number of edges
using fast Laplacian solvers~\cite{spielman2004,koutis2010,koutis2011,cohen2014}.

The random walker can also be interpreted as a forest sampling method.
We write \(\mathcal{F}_s\) for the set of spanning forests
of the graph where each tree spans all seeds of a given category, and the non-intersecting trees together span the graph.
Any such forest defines a label for each vertex \(v\).
We can define a Gibbs distribution over these forests by
\begin{equation}\label{eq:gibbs_forest}
p(f) = \frac{1}{Z} \prod_{e \in f} w_e = \frac{1}{Z} w(f)
\end{equation}
for a forest \(f \in \mathcal{F}_s\) with weight \(w(f) := \prod_{e \in f} w_e\).
The partition function is given by \(Z := \sum_{f \in \mathcal{F}_s} w(f)\).
It can then be shown~\cite{grady_discrete_2010,fitasanmartinProbabilisticWatershedSampling2019}
that the probability with which a forest sampled from this distribution
assigns a vertex \(v\) to the label \(l\) is precisely the random walker probability \(p_{\text{rw}}(v \sim l)\).

\section{Impossibility results}\label{sec:impossibility}
In this section, we present a framework that greatly generalizes Karger's
algorithm to what we call \emph{general contraction algorithms}. We then show
that algorithms from two natural subsets of this class of algorithms cannot be
used to efficiently find \(s\)-\(t\)-mincuts or normalized cuts.

General contraction algorithms are described formally in
algorithm~\ref{alg:contraction}. Like Karger's algorithm, they sample and
contract edges until two vertices remain. But the contraction probabilities may
now depend on arbitrary graph properties, rather than being proportional to the
edge weights.

\begin{algorithm}
\SetKwInOut{Input}{Input}
\SetKwInOut{Output}{Output}
\Input{graph \(G\), optionally with seeds \(s\) and \(t\) (depending on the algorithm)}
\Output{contracted graph with 2 vertices}
\While{\(G\) has more than 2 vertices}{
    \(A \gets\) weighted adjacency matrix of \(G\)\;
    choose an edge \(e\) with probability proportional to \(\mathcal{W}(e; A, s, t)\)\;
    \(G \gets G/e\)\;
}
\Return{\(G\)}\;
\caption{The general contraction algorithm.
When \(s\) / \(t\) is contracted with another node, the new node becomes
the new \(s\) / \(t\). The score function \(\mathcal{W}\)
distinguishes different contraction algorithms.}\label{alg:contraction}
\end{algorithm}

Any contraction algorithm is fully defined by specifying the score
\(\mathcal{W}(e; A, s, t)\) it assigns to an edge \(e\) in a graph with weighted adjacency
matrix \(A\) and seed indices \(s\) and \(t\) (where \(e\) is a two-set \(\{i, j\}\) of vertices).
The weighted adjacency matrix contains the edge weights, i.e.\ \(A_{ij} = w_{ij}\) is the weight
between vertices \(i\) and \(j\).

Karger's algorithm is clearly the special case with \(\mathcal{W}(e) = w_e\),
or more explicitly, \(\mathcal{W}(\{i, j\}; A, s, t) = A_{ij} = A_{ji}\).
As another example, we can define the following modification of Karger's
algorithm:
\begin{equation}
  \mathcal{W}(\{i, j\}; A, s, t) =
  \begin{cases}
    0, &\quad \{i, j\} = \{s, t\}\\
    A_{ij}, &\quad \text{otherwise}
  \end{cases}\,.
\end{equation}
This contraction algorithm, which we call the \emph{\(s\)-\(t\)-contraction algorithm},
never contracts edges connecting \(s\) and \(t\) and therefore always samples
an \(s\)-\(t\)-cut. It is relatively easy to show that this particular extension
of Karger's algorithm finds \(s\)-\(t\)-mincuts with only very low probability
on some graphs (we will shortly give a simple proof). However, the
framework of general contraction algorithms also includes choices that always find
\(s\)-\(t\)-mincuts, such as
\begin{equation}
  \mathcal{W}(e; A, s, t) := \begin{cases}
    0, & e \in C \\
    1, &\text{otherwise}
  \end{cases}
\end{equation}
for the cut set \(C\) of some \(s\)-\(t\)-mincut.
Of course this specific
method is impractical because calculating the weights requires already knowing
an \(s\)-\(t\)-mincut, but it demonstrates that contraction algorithms can in
principle find \(s\)-\(t\)-mincuts with high probability. What is a priori
unclear is whether any \emph{practical} contraction algorithm can do so.

To answer this question, we introduce two natural and very general classes of contraction
algorithms, for which we can formally prove impossibility results:
\emph{continuous} contraction algorithms and \emph{local} ones.

\begin{definition}
  A \emph{continuous contraction algorithm} is a general contraction algorithm (see algorithm~\ref{alg:contraction})
  whose score \(\mathcal{W}\) is a continuous function of the adjacency matrix \(A\).
\end{definition}
Intuitively, this means that slight changes in the weights of a graph lead to only slight changes in the contraction
probabilities for continuous contraction algorithms. Since the same results hold for finding \(s\)-\(t\)-mincuts and normalized cuts, we state
them together:
\begin{theorem}\label{thm:continuous_impossibility}
  For any continuous contraction algorithm, there is a family of \(s\)-\(t\)-graphs (graphs) on which
  it finds an \(s\)-\(t\)-mincut (normalized cut) with only exponentially low probability in the number of vertices.
\end{theorem}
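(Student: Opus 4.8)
The plan is, for every even $k$, to construct an $s$-$t$-graph on $n = k+2$ vertices that is an arbitrarily small perturbation of the complete bipartite graph $K_{2,k}$ with unit weights, where $s$ and $t$ are the two vertices on the small side and $v_1,\dots,v_k$ are the others. In the base graph $G_0 = K_{2,k}$ any $s$-$t$-cut must place each $v_i$ on the side of $s$ or of $t$ and therefore cuts exactly one of the two edges at $v_i$; call such a cut \emph{simple}. Hence all $2^k$ $s$-$t$-cuts of $G_0$ are simple and have cost exactly $k$, which is the $s$-$t$-mincut value. For the normalized-cut statement I would use the same $G_0$: an elementary calculation shows that over all partitions of $K_{2,k}$ the minimum of $\operatorname{ncut}$ equals $1$ and is attained \emph{exactly} by the \emph{balanced} simple cuts, those with exactly $k/2$ of the $v_i$ on the side of $s$. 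There are $\binom{k}{k/2} = \Omega(2^k/k)$ balanced cuts -- this exponential multiplicity is the whole point.

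Next, for a fixed balanced cut $J_0$, I would form $G_{J_0}^{\eps}$ from $G_0$ by decreasing the weight of every edge in the cut set of $J_0$ to $1-\eps$, leaving the rest at $1$. A one-line inclusion--exclusion count shows that for any other simple cut, its cut set shares strictly fewer than $k$ edges with that of $J_0$; consequently, for all sufficiently small $\eps > 0$, $J_0$ becomes the \emph{unique} $s$-$t$-mincut of $G_{J_0}^{\eps}$. For the normalized cut one checks, again by a short computation, that this perturbation decreases $\operatorname{ncut}(J_0)$ strictly faster (to first order in $\eps$) than it decreases the normalized cut of any other balanced cut, while the remaining (finitely many, for fixed $k$) partitions keep $\operatorname{ncut}$ bounded above $1$; so $J_0$ is also the unique minimum normalized cut of $G_{J_0}^{\eps}$ for small $\eps$. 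The key features are that $G_{J_0}^{\eps} \to G_0$ as $\eps \to 0$ and that, for fixed $k$, only finitely many graphs $G_{J_0}^{\eps}$ are involved.

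The engine of the proof is continuity. When a general contraction algorithm is run on a graph with edge-weight vector $w$, the distribution of its output partition is a finite mixture over contraction sequences, and the probability of a fixed sequence is a product of factors $\mathcal{W}(e; A_H, s, t) / \sum_{e'} \mathcal{W}(e'; A_H, s, t)$ in which each intermediate adjacency matrix $A_H$ is a fixed affine function of $w$. Thus, as long as these denominators stay positive, the probability of returning any particular partition is a continuous function of $w$, so for each fixed partition $J$ the probability of returning $J$ on $G_{J_0}^{\eps}$ tends, as $\eps \to 0$, to the probability of returning $J$ on $G_0$ -- and since only finitely many $J_0$ and $J$ matter, the convergence is uniform in them. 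Now on $G_0$ the $\binom{k}{k/2}$ balanced cuts are pairwise distinct possible outputs, so their return probabilities sum to at most $1$; hence some balanced $J_0^{\star}$ is returned on $G_0$ with probability at most $\binom{k}{k/2}^{-1}$. Picking $\eps$ small enough then forces the algorithm to return $J_0^{\star}$ on $G_{J_0^{\star}}^{\eps}$ with probability at most $2\binom{k}{k/2}^{-1} = O(k\,2^{-k})$, which is exponentially small in $n$, even though $J_0^{\star}$ is the unique $s$-$t$-mincut (resp.\ minimum normalized cut) of that graph. Letting $k\to\infty$ yields the promised family.

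I expect the one genuinely delicate point to be the bookkeeping behind the continuity claim -- in particular, excluding (or perturbing away, or ruling out by a mild non-degeneracy assumption such as positivity of $\sum_{e'}\mathcal{W}(e';\cdot)$ on connected weighted graphs) the degenerate configurations in which a contraction step is undefined because every score vanishes; the combinatorial facts about $K_{2,k}$ and its tiny perturbations are entirely routine.
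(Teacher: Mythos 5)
Your proposal is correct and follows essentially the same route as the paper: take a graph with exponentially many co-optimal cuts, use pigeonhole to find one selected with exponentially small probability, establish that the output distribution of a continuous contraction algorithm is a continuous function of the weights (via the product-over-contraction-sequences decomposition, exactly the paper's Lemma on continuity), and perturb the weights of that cut's edges to $1-\eps$ to make it uniquely optimal while barely changing its selection probability; your $K_{2,k}$ with unit weights is precisely the paper's counterexample graph of parallel length-two $s$-$t$ paths. The only divergence is the normalized-cut witness: you reuse $K_{2,k}$ and its $\binom{k}{k/2}$ balanced simple cuts (which does work, but requires verifying that these are exactly the $\operatorname{ncut}$ minimizers and that the perturbation singles out $J_0$ among them), whereas the paper switches to the complete graph $K_n$, where \emph{every} one of the $2^{n-1}$ cuts has $\operatorname{ncut} = n/(n-1)$ and the uniqueness-after-perturbation step is immediate.
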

The full proof of Theorem~\ref{thm:continuous_impossibility} and all other results can be found
in the supplementary material.
The idea of the proof is to take a graph in which there are exponentially many different
\(s\)-\(t\)-mincuts (normalized cuts). Then there must be at least one such cut that is chosen with
exponentially low probability. If the weights are perturbed slightly to make this cut
the unique \(s\)-\(t\)-mincut (normalized cut), the probability of sampling it will remain low.
The reason that this proof does not apply to global minimum cuts is that there are at most
\({n \choose 2}\) global minimum cuts in any graph, as Theorem~\ref{thm:karger_probability_bound}
implies.

We now come to our second impossibility result, that for \enquote{local} contraction
algorithms.
\begin{definition}
  The \emph{neighborhood} \(N(e)\) of an edge \(e = \{u, v\} \in E\) is the subgraph
  of \(G\) induced by the neighbors of \(u\) and \(v\). It consists of the vertex set \(V_{N(e)} := \setcomp{x \in V}{x \text{ is neighbor of } u \text{ or } v}\)
  and of all edges from \(E\) connecting pairs of vertices from that set.
\end{definition}
We treat two neighborhoods \(N(e_1), N(e_2)\) as the same if there is a graph isomorphism \(f: V_{N(e_1)} \to V_{N(e_2)}\)
that also preserves \(s\) and \(t\) if applicable, i.e.\ \(f(s) = s, f(t) = t\).
\begin{definition}
A general contraction algorithm is \emph{local} if the score \(\mathcal{W}(e; A, s, t)\) can be written
as a function \(\mathcal{W}(N(e), G)\).
\end{definition}

Informally speaking, a local contraction algorithm assigns scores based only
on local properties of the edges and on global properties of the entire graph.
It does not have access to properties of the individual edges that depend on their placement
in the graph.

For this class of algorithms, we can prove a similar result as for continuous
contraction algorithms:
\begin{theorem}
  There is a family of \(s\)-\(t\)-graphs (graphs)
  on which any local contraction algorithm finds an \(s\)-\(t\)-mincut (normalized cut)
  with only exponentially low probability.
\label{thm:local_impossibility}
\end{theorem}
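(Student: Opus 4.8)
My plan is to produce a \emph{single} family of graphs that defeats \emph{every} local contraction algorithm at once, exploiting the fact that locality becomes crippling on highly symmetric graphs. Let $G_n$ be the complete graph on $n$ vertices ($n$ even); split $V$ into two halves $L,R$ of size $n/2$ and set $w_e = M$ for every edge inside $L$ or inside $R$, and $w_e = 1$ for every edge between $L$ and $R$, where $M$ is a sufficiently large polynomial in $n$. For the $s$-$t$-mincut statement, additionally declare $s \in L$ and $t \in R$.

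The crucial structural point is that in a complete graph every edge $e=\{u,v\}$ has neighborhood $N(e)=G$, since the neighbors of $u$ together with those of $v$ already exhaust $V$; and contracting an edge of a complete graph yields a smaller complete graph, so this persists throughout Algorithm~\ref{alg:contraction}. Hence for \emph{any} local algorithm the score $\mathcal{W}(N(e),G)=\mathcal{W}(G,G)$ is one and the same constant over all edges of the current graph, so on $G_n$ the algorithm degenerates to contracting a \emph{uniformly random} edge at each step (the case where this constant vanishes is not a well-defined algorithm and can be discarded). This reduction, turning an arbitrary local rule into plain uniform contraction, is the conceptual heart of the argument.

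Next I would check that $(L,R)$ is the \emph{unique} optimum for both objectives. Any $s$-$t$-cut other than $(L,R)$ must split $L$ or split $R$ and hence cut a weight-$M$ edge; for $M$ large this already dominates the cost $n^2/4$ of $(L,R)$, and because $s\in L$, $t\in R$ the only $s$-$t$-cut splitting neither half is $(L,R)$ itself. Similarly, a short computation shows that for $M$ a large enough polynomial in $n$, $(L,R)$ has normalized cut $O(1/M)$, strictly smaller than that of every other cut (the closest competitor merely moves a single vertex across and already has normalized cut $\Theta(1/n)$). This step is routine; its only content is fixing $M$ large enough. Finally, for the probability bound: uniform contraction outputs $(L,R)$ iff it never contracts a between-edge. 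For the first $n/4$ contractions the contracted copies of $L$ and $R$ still have at least $n/4$ super-vertices each, so the between-edges number at least $(n/4)^2$ out of at most $\binom{n}{2}$ edges of the current complete graph, and the chance of picking one is at least $1/8$. Thus the probability of outputting $(L,R)$ is at most $(7/8)^{n/4}=2^{-\Omega(n)}$ — essentially the counting behind Theorem~\ref{thm:karger_probability_bound}, run in the opposite direction.

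The part that takes genuine thought is the interplay of the first two steps: one needs the optimal cut to be unique while keeping the graph symmetric enough that no local score can tell its cut edges from the rest — demands that pull against each other, since a true sparse cut is normally locally visible at its boundary. The complete-graph construction reconciles them, because completeness forces all edge-neighborhoods to coincide \emph{regardless of the weights}, so the weights are then free to pin down a single contraction-unfriendly optimum. Unlike the proof of Theorem~\ref{thm:continuous_impossibility}, which perturbs a degenerate instance into one with a unique optimum, here the rigidity of the symmetry means no perturbation is needed and one fixed instance works against all local algorithms simultaneously. I expect the only real labor in a full write-up to be the $M$-dependent uniqueness estimates and making the $(7/8)^{n/4}$ bound fully rigorous as the graph is contracted.
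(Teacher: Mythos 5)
There is a genuine gap, and it sits exactly at what you call the ``conceptual heart'' of your argument: the claim that in your weighted complete graph every local algorithm must score all edges equally. This only holds if the neighborhood $N(e)$ is read as an unmarked, unweighted subgraph. But the class of local contraction algorithms is plainly meant to contain Karger's algorithm ($\mathcal{W}(e)=w_e$) and the $s$-$t$-contraction algorithm, both of which the paper treats as the motivating local examples; for these to be expressible as functions of $N(e)$, the neighborhood must carry the edge weights and a marking of which edge $e$ is being scored (just as it carries the markings of $s$ and $t$). Under that reading, an internal edge of weight $M$ and a cross edge of weight $1$ in your graph have \emph{non}-isomorphic neighborhoods, so locality forbids nothing: the local rule $\mathcal{W}(e)=\mathbf{1}[w_e\geq M]$ never contracts a cross edge and outputs $(L,R)$ with probability $1$, and even plain Karger's algorithm succeeds with non-negligible probability for large $M$. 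Your construction therefore does not defeat ``every local contraction algorithm''; it defeats only the degenerate subclass that ignores the weight of the edge it is scoring.

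The tension you correctly identify --- the optimum must be unique, yet its cut edges must be locally indistinguishable from non-cut edges --- cannot be resolved by breaking ties with weights, precisely because edge weights are local data. The paper resolves it differently: it builds \emph{unweighted} graphs $G_n$ out of many parallel ``bands'' between $s$ and $t$, engineered so that every edge falls into one of three neighborhood isomorphism classes (red, $s$-blue, $t$-blue) and each class contains, in every untouched band, both edges that belong to the unique $s$-$t$-mincut and edges that do not. Since a local algorithm must score all edges in a class equally, each of the $\Theta(n)$ bands independently contributes a constant probability (at least $1/118$) of a wrong contraction, giving the exponential bound. Your uniqueness checks and the $(7/8)^{n/4}$ calculation are fine as far as they go, but they rest on a reduction to uniform contraction that the definition of locality does not support.
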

\begin{figure}[htbp]
\centering
\includegraphics[width=.5\linewidth]{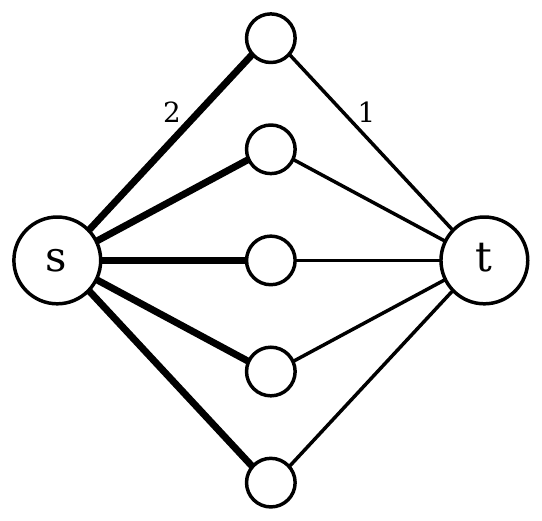}
\caption{\label{fig:simple_counterexample}Graph where the
  \(s\)-\(t\)-contraction algorithm performs badly. The thick edges have a higher weight.}
\end{figure}
To illustrate the idea of the proof, consider the graph shown in Fig.~\ref{fig:simple_counterexample}.
This graph can be used to prove Theorem~\ref{thm:local_impossibility} for the \(s\)-\(t\)-contraction algorithm
(instead of for local contraction algorithms in general) as follows:
If we choose a weight of 1 for the thin edges and 2 for the
thicker edges, then there is a unique \(s\)-\(t\)-mincut. To find this cut, only
thick edges may be contracted during all \(n - 2\) contractions.
But the probability of choosing a thick edge for contraction is always only \(\frac{2}{3}\).
So the overall success probability is
\begin{equation}
  p_{\text{success}} = \left(\frac{2}{3}\right)^{n - 2}
\end{equation}
If we scale up the graph in Fig.~\ref{fig:simple_counterexample}, this success
probability diminishes exponentially in the number of vertices.

The general proof for all local contraction algorithms (see supplementary material)
uses the same idea of a graph with many parallel paths between \(s\) and \(t\), each of which has
to be contracted correctly independently. Those paths are more complex than in
Fig.~\ref{fig:simple_counterexample} and chosen such that it is impossible to
decide whether an edge belongs to the \(s\)-\(t\)-mincut based only on local properties.

The same proof idea implies that local contraction algorithms
cannot even approximate the \(s\)-\(t\)-mincut beyond some threshold with high probability:
\begin{corollary}\label{thm:local_approximability}
  The probability of finding an \(\alpha\)-minimal \(s\)-\(t\)-cut of the graphs from
  Theorem~\ref{thm:local_impossibility} is exponentially low
  for all local contraction algorithms if \(\alpha < 2\).
\end{corollary}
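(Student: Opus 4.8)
The plan is to reuse the same family of $s$-$t$-graphs constructed for Theorem~\ref{thm:local_impossibility} and argue that, in that family, \emph{every} $s$-$t$-cut whose cost is within a factor $\alpha < 2$ of the $s$-$t$-mincut must in fact agree with the unique $s$-$t$-mincut on a large (all but $o(n)$, say) fraction of the parallel gadgets. Concretely, recall that the graph there consists of $\Theta(n)$ disjoint "parallel paths" (copies of a fixed gadget) wired between $s$ and $t$, each gadget contributing a fixed amount $c_0$ to the mincut when cut correctly and strictly more — at least $c_1 > c_0$ — when cut in any other way that still separates $s$ from $t$ through that gadget. So the $s$-$t$-mincut has cost $k c_0$ for $k$ gadgets, and any $s$-$t$-cut that cuts $j$ of the gadgets incorrectly has cost at least $(k-j)c_0 + j c_1 = k c_0 + j(c_1 - c_0)$. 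Requiring this to be $\le \alpha k c_0$ forces $j \le \frac{(\alpha-1)k c_0}{c_1 - c_0}$. By choosing the gadget so that $c_1/c_0$ is as close to $2$ as we like (the simple example in Fig.~\ref{fig:simple_counterexample} already has $c_1 = 2, c_0 = 1$, i.e.\ $c_1 - c_0 = c_0$), the bound becomes $j \le (\alpha - 1)k$, which for any fixed $\alpha < 2$ leaves at least $(2 - \alpha)k = \Omega(n)$ gadgets that must be cut \emph{exactly} optimally.

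The second step is to push the contraction argument from the proof of Theorem~\ref{thm:local_impossibility} through this relaxed target. In that proof, the local algorithm's inability to distinguish the "right" edge inside a gadget from a "wrong" one (by the isomorphism/locality argument) means each gadget is contracted correctly with probability bounded away from $1$ — say at most $\rho < 1$ — independently across gadgets, because the gadgets are vertex-disjoint and the contractions within one gadget do not affect the local neighborhoods of edges in another until that gadget is resolved. An $\alpha$-minimal $s$-$t$-cut then requires a set $S$ of at least $(2-\alpha)k$ gadgets to \emph{all} be contracted correctly; since the algorithm cannot target which gadgets these are (again by symmetry, all gadgets look alike), the probability that all gadgets in any fixed large subset come out right is at most $\rho^{(2-\alpha)k}$, and a union bound over the $\binom{k}{(2-\alpha)k}$ possible such subsets still leaves the total success probability exponentially small in $k = \Theta(n)$, provided $\rho$ is small enough relative to the binomial factor — which we arrange by taking enough parallel copies of the gadget (inflating $k$) and, if necessary, a gadget with more internal branching to drive $\rho$ down. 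Alternatively, and more cleanly, one observes that if \emph{fewer} than $(2-\alpha)k$ gadgets are correct the cut is not $\alpha$-minimal, so $\Pr[\alpha\text{-minimal}] \le \Pr[\text{at least } (2-\alpha)k \text{ gadgets correct}]$, and since the expected number of correct gadgets is $\le \rho k < (2-\alpha)k$ once $\rho < 2-\alpha$, a Chernoff bound on the sum of these near-independent indicator variables gives the exponential decay directly.

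The main obstacle is the independence (or near-independence) claim across gadgets under the contraction dynamics: a general contraction algorithm interleaves contractions from all gadgets in an arbitrary, adaptively chosen order, so one must verify that locality of $\mathcal{W}$ really does make the "is this gadget contracted correctly" events a sequence of trials each with conditional success probability $\le \rho$ regardless of the history in other gadgets. I expect this is already handled in the supplementary proof of Theorem~\ref{thm:local_impossibility} — the whole point of that construction is that the gadgets are locally indistinguishable and their internal states are screened off from one another — so the work here is essentially bookkeeping: re-expressing "all $k$ gadgets correct" as "at least $(2-\alpha)k$ gadgets correct," recomputing the threshold from $\alpha < 2$ using $c_1 - c_0 = c_0$, and invoking the same tail bound. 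No new construction is needed; the corollary is a quantitative refinement of the theorem's conclusion, obtained by trading the all-or-nothing event for a majority-type event and observing that $\alpha < 2$ is exactly the regime in which a constant fraction of gadgets is still forced.
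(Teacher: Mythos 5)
Your first step---the cost counting---matches the paper's proof exactly: the \(s\)-\(t\)-mincut of \(G_n\) costs \(3n\) with one forbidden edge per band, every mishandled band adds at least \(1\) to the best cut still achievable, so an \(\alpha\)-minimal cut forces at least \(3n - 3n(\alpha-1) = (2-\alpha)\cdot 3n\) of the \(3n\) bands to be handled with no mistake at all, and this is a positive fraction of the bands precisely when \(\alpha < 2\). Up to notation, this is the paper's argument verbatim.

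The gap is in your second step. You reduce to ``at least \((2-\alpha)k\) gadgets correct, each correct independently with probability at most \(\rho\)'' and then need either a Chernoff bound (which, as you note, requires \(\rho < 2-\alpha\)) or a union bound over \(\binom{k}{(2-\alpha)k}\) subsets. But the only per-band guarantee the construction actually yields is that the contraction which first touches a fresh band is wrong with conditional probability at least \(\frac{1}{118}\) (and only while at most \(n/2\) bands have been touched); that is, \(\rho \leq \frac{117}{118}\) and nothing smaller. With \(\rho = \frac{117}{118}\) the Chernoff condition fails for every \(\alpha > 1 + \frac{1}{118}\), and the union-bound quantity \(\binom{k}{(2-\alpha)k}\rho^{(2-\alpha)k}\) is exponentially \emph{large}; indeed, under your own independence model with success probability near \(\frac{117}{118}\), ``at least \((2-\alpha)k\) gadgets correct'' is a typical event, not a rare one, so no tail bound of this shape can work. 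Your proposed remedies do not close this: adding more parallel copies leaves \(\rho\) unchanged, and redesigning the gadget to push \(\rho\) below \(2-\alpha\) is a new construction you do not supply (and the corollary is stated for the graphs of Theorem~\ref{thm:local_impossibility}, so the family is fixed). The independence you hope is ``already handled'' in the supplement is not there either---the bands all share \(s\) and \(t\), and a local score may depend on the whole graph \(G\), so the paper never argues cross-band independence. Instead it reuses the \emph{sequential} bound from the theorem's proof, that touching \(k\) bands while making no mistake whatsoever has probability at most \(\left(\frac{117}{118}\right)^k\), and applies it directly with \(k = (6-3\alpha)n\) (capped at \(n/2\)); the threshold \(2\) enters only through the cost counting, never through a condition like \(\rho < 2-\alpha\). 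You have correctly identified the crux, but the mechanism that resolves it is this conditional, step-by-step bound on the no-mistake event, not concentration of independent per-gadget indicators.
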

The threshold of 2 does not carry a deep meaning. It just comes from
the particular graph we used for the proof and the statement may hold for a larger threshold.
Note that this result is only stated for \(s\)-\(t\)-mincuts, not for normalized cuts.
Since normalized cut costs are always in \([0, 2]\), the proof does not transfer
as it did for the other theorems.

\section{Seeded contraction algorithm}\label{sec:seeded_segmentation}
The results from the previous section show that sampling cuts using local or
continuous contraction algorithms and then taking the smallest cut out of the
population sampled this way does not necessarily give a minimum cut. However,
this population can be used in other ways. In this section, we describe a new
method for seeded graph segmentation that can be interpreted as computing the
\emph{mean} of the sampled cuts, rather than the single smallest cut.
We also describe theoretical similarities between our method and the
random walker algorithm / harmonic energy minimization. In the next section, we will compare these two methods
empirically.

To make the new method widely applicable, we first generalize the
\(s\)-\(t\)-contraction algorithm from the previous section to more than two
labels and multiple seeds per label.
The problem setup consists of a weighted graph \(G = (V, E, w)\) and a surjective seed function
\(s: V \to \{0, \ldots, k\}\) where \(k\) is the number of labels and 0 is assigned to unlabeled nodes.

A given cut \(V = V_1 \cup \ldots \cup V_k\) into disjoint vertex subsets \emph{respects}
the seeds \(s\) if \(s(v) = l \implies v \in V_l\) for all \(l \in \{1, \ldots k\}\) and \(v \in V\). Such a cut defines a labeling
of the entire graph, by assigning label \(l\) to vertex \(v\) if \(v \in V_l\).

In the special case of \(k = 2\) and only one seed per class, these cuts
are simply \(s\)-\(t\)-cuts, which can be sampled with the \(s\)-\(t\)-contraction algorithm
from the previous section. The \emph{seeded contraction algorithm}
(algorithm~\ref{alg:seeded_contraction}), generalizes this and produces cuts that respect the input seeds
for arbitrary numbers of classes and seeds per class.

\begin{algorithm}
\SetKwInOut{Input}{Input}
\SetKwInOut{Output}{Output}
\Input{graph \(G = (V, E, w)\), labels \(s: V \to \{0, \ldots, k\}\)}
\Output{contracted graph with \(k\) vertices}
Contract all edges between nodes with the same label\;
Remove edges between nodes with different labels\;
\While{\(G\) has more than \(k\) vertices}{
    choose an edge \(\{v_1, v_2\}\) with probability proportional to its weight\;
    \(G \gets G/\{v_1, v_2\}\)\;
    \uIf{\(v_1\) or \(v_2\) has a label}{
        assign the new node created by merging \(v_1\) and \(v_2\) that label\;
    }
    Remove edges between nodes with different labels\;
}
\Return{\(G\)}\;
\caption{Seeded contraction algorithm with \(k\) different labels. A label of 0 means \enquote{no label}.}\label{alg:seeded_contraction}
\end{algorithm}

For \(l \in \{1, \ldots k\}\), we define \(p_{\text{contr}}(v \sim l)\) as the probability
that the seeded contraction algorithm produces a cut which assigns label \(l\) to
the vertex \(v\). Because this algorithm is a very natural extension of Karger's algorithm
to seeded segmentation, we will also refer to this distribution as the \enquote{Karger potential}.

The seeded contraction algorithm can be run multiple times to approximately find
the probabilities \(p_{\text{contr}}(v \sim l)\) for each vertex \(v\) and label
\(l\). If a hard assignment is required, each vertex can then be assigned to the
label for which this probability is highest.

To compare the Karger potential to the random walker potential, we reinterpret the seeded contraction
algorithm as a forest sampling method. During a single run of the contraction algorithm,
\(n - k\) edges are selected for contraction. These edges form a spanning \(k\)-forest of
the graph, where each component of the forest is one of the subsets \(V_l\) of the cut.
So our method defines a probability distribution over the set \(\mathcal{F}_s\) of \(k\)-forests that separate the
seeds with different labels. \(p_{\text{contr}}(v \sim l)\) is the probability that a forest
sampled from this distribution connects \(v\) to the seeds with label \(l\).

This is reminiscent of the random walker distribution \(p_{\text{rw}}\) which
can be interpreted as the probability that a forest sampled from a Gibbs distribution
connects \(v\) to the seeds with label \(l\). The only difference between the two methods
is the distribution over forests they use.

To understand the effects of this difference, we will derive an expression for the probability
that the seeded contraction algorithm samples a given forest.

For a subset \(\hat{E} \subset E\) of edges, we define
\begin{equation}
  \begin{split}
    \mathcal{C}(\hat{E}) := \hat{E} \cup
    \{e \in E | e \cup \hat{E} \text{ has cycles or contains}\\
      \text{a path between seeds with different labels}\}\,.
  \end{split}
\end{equation}
\(\mathcal{C}(\hat{E})\) is precisely the set of edges that
has been removed after the edges from \(\hat{E}\) have been contracted because
each edge that forms a cycle with those in \(\hat{E}\) has become a
self-loop. We write \(c(\hat{E}) := \sum_{e \in \hat{E}} w_e\) for the sum
of weights of a set of edges. Then the total weights of edges remaining after contracting the edges from \(\hat{E}\)
will be \(c\left(E \setminus \mathcal{C}\left(\hat{E}\right)\right)\).

Therefore, the probability of contracting edges \(e_1, \ldots, e_{n - 2}\) in
that order is
\begin{equation}
p(e_1, \ldots, e_{n - 2}) = \prod_{i = 1}^{n - 2} \frac{w(e_i)}{c\left(E \setminus \mathcal{C}(\{e_1, \ldots, e_{i - 1}\})\right)}\,.
\end{equation}
Note the \(i - 1\) in the denominator; the term describes the probability at the \(i\)th contraction
step, at which point only \(e_1, \ldots, e_{i - 1}\) have been contracted.

For the sampled forest \(f\), it does not matter in which order its constituent
edges \(e_1, \ldots, e_{n - 2}\) are contracted, so the total probability is
\begin{equation}\label{eq:karger_forest_distribution}
\begin{split}
p(f) &= \sum_{\sigma \in S_{n - 2}} \prod_{i = 1}^{n - 2}
\frac{w(e_{\sigma(i)})}{c\left(E \setminus\mathcal{C}(\{e_{\sigma(1)}, \ldots, e_{\sigma(i - 1)}\})\right)}\\
&= w(f) \sum_{\sigma \in S_{n - 2}} \prod_{i = 1}^{n - 2}
\frac{1}{c\left(E \setminus \mathcal{C}(\{e_{\sigma(1)}, \ldots, e_{\sigma(i - 1)}\})\right)}\,.
\end{split}
\end{equation}

We can compare this distribution to the Gibbs distribution over 2-forests
that the random walker algorithm samples from,
\begin{equation}\label{eq:gibbs_repeated}
p(f) = \frac{1}{Z} \prod_{e \in f} w_e = \frac{1}{Z} w(f)\,,
\end{equation}
where \(Z = \sum_{f \in \mathcal{F}_s} w(f)\). Both distributions contain the
term \(w(f)\) but where the Gibbs distribution
has a partition function \(Z\) that is independent of the forest \(f\),
the distribution of the contraction algorithm has the sum over permutations term with an
additional dependency on \(f\).

Note that \(w(e_1), \ldots, w(e_{n - 2})\) all contribute to the cost of
\(\mathcal{C}(\{e_1, \ldots, e_{n - 2}\})\). So a 2-forest with large edge weights has a high
probability not just because of the term \(w(f)\) but also because of the second term
in eq.~\eqref{eq:karger_forest_distribution}. This means that compared to the Gibbs distribution
from eq.~\eqref{eq:gibbs_repeated}, we expect the contraction distribution to favor heavy forests
more strongly.

Therefore, the Karger potential should be \enquote{more confident} than the random walker
potential -- both will typically be highest for the same label \(l\), but \(p_{\text{contr}}(v \sim l)\) will be
higher than \(p_{\text{rw}}(v \sim l)\) for that label.

There is a second effect which is of a topological nature: the cost of \(\mathcal{C}(\{e_1, \ldots, e_{n - 2}\})\)
will tend to be large if \(\mathcal{C}(\{e_1, \ldots, e_{n - 2}\})\) contains many edges.
Since \(e_1, \ldots, e_{n - 2}\) is a 2-forest, the only edges not in that set are precisely
the edges in the cut set that the 2-forest induces. So this is again a reason to think
that the Karger distribution assigns more extreme probabilities than the Gibbs distribution
-- a large weight of the forest is equivalent to a small weight of the induced cut.

There is a big difference in how the Karger and random walker potential can be calculated
in practice. As mentioned, the random walker potential can be calculated exactly by solving
a system of linear equations.
In contrast, calculating the Karger potential exactly appears to be infeasible for all but the
smallest graphs. However, the seeded contraction algorithm can be used to
efficiently sample from the distribution \(p_{\text{contr}}\) and by running it
multiple times, this distribution can be approximated.

To achieve a fixed precision in the approximation, the seeded contraction algorithm
needs to be run only a constant number of times, independent of the size of the graph.
Our segmentation method therefore has a runtime complexity of only \(\mathcal{O}(m)\),
where \(m\) is the number of edges of the graph (details on how to implement the seeded
contraction algorithm in \(\mathcal{O}(m)\) time can be found in the supplementary material).

\section{Experiments}\label{sec:experiments}
We compare the new segmentation method from the previous section to the random walker
on an image segmentation and a semi-supervised learning task.
To keep the focus on the methods under comparison,
rather than the rest of the pipelines, we chose two classical tasks and
well-known, relatively simple pipelines for computing the edge weights.
All of our code can be found at
\url{https://github.com/ejnnr/karger_extensions}. A few additional details, such
as empirical runtimes, are part of the supplementary material.

\paragraph{Seeded segmentation} We use the Grabcut~\cite{rotherGrabCutInteractiveForeground2004}
images with sparse labels from~\cite{gulshanGeodesicStarConvexity2010}.
To create graphs from images, we used the usual
4-connected topology, meaning that each pixel is connected by an edge to
its four neighbors (or fewer at the border).

We obtained edge weights with holistically-nested edge detection~\cite{xie2015}
using a PyTorch implementation~\cite{niklaus2020}. This yields an intensity \(g_i \in [0, 1]\)
(after dividing by the maximum intensity) for each pixel,
where higher values correspond to edges recognized by the network.
For the edge weights, we then used
\begin{equation}
  w_{ij} = \exp\left(-\beta (g_i + g_j)^2\right)\,,
\end{equation}
where \(\beta\) is a free parameter.

Figure~\ref{fig:potential} shows the effect of \(\beta\) on one of the Grabcut
images.
Note that for intermediate values of \(\beta\), e.g.\ \(\beta = 5\), we can see the
higher \enquote{confidence} of the Karger potential compared to the random walker
potential, as hypothesized in Section~\ref{sec:seeded_segmentation}.

\begin{figure*}[htb]
  \centering
  \includegraphics[width=\linewidth]{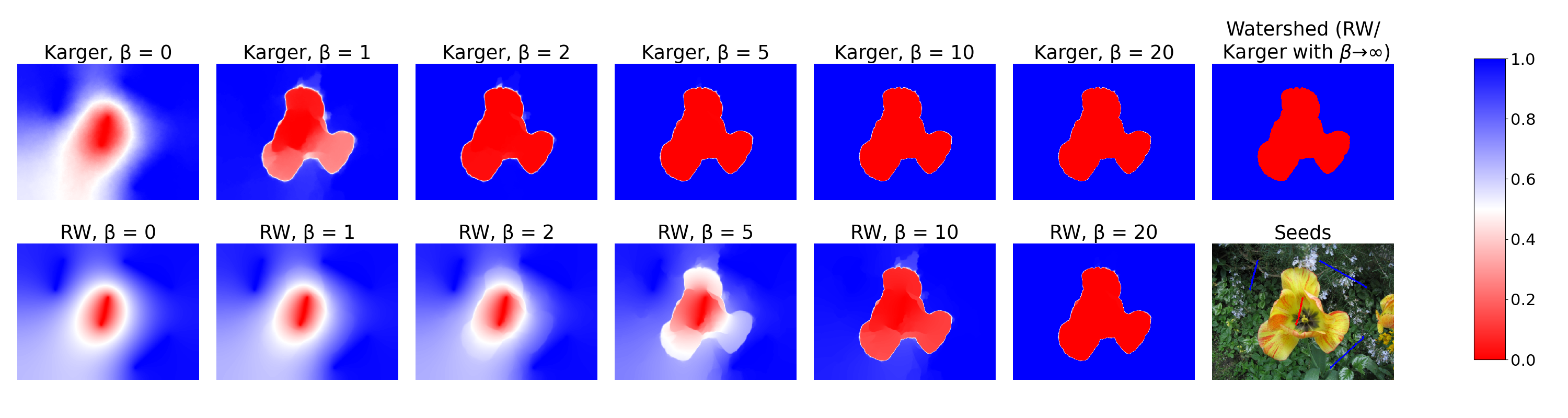}
  \caption{The probabilistic estimates of the Karger potential, compared to those
  of the random walker / harmonic energy minimization, for various graph edge weights.}
  \label{fig:potential}
\end{figure*}

In addition to the random walker, we also compare to the watershed segmentation,
which has been used for both seeded segmentation~\cite{cousty_2009} and
semi-supervised learning~\cite{challa2019}. This segmentation arises from a
maximum spanning forest that separates the seeds~\cite{cousty_2009}. If there is
only one maximum spanning forest, both the Karger potential and the random
walker potential converge to this segmentation as \(\beta \to \infty\). The more general
case of multiple maximum spanning forests is described by the Power Watershed
framework~\cite{coupriePowerWatershedUnifying2011,najman2017}, which generalizes
both the watershed and the random walker. This framework has two parameters,
\(q\) and \(p\), and the case \(q = 2, p \to \infty\) is the limit of the random walker
for \(\beta \to \infty\), without any assumptions on the number of maximum spanning
forests. When there \emph{is} a unique maximum spanning forest, Power Watershed reduces
to watershed; in particular, this is the case if all the edge weights are distinct.
So we rounded the edge weights to 8 bits to artificially introduce the
edges with equal weight that give Power Watershed the opportunity to shine
relative to watershed. This leads to 256 different possible edge weights,
exactly as in~\cite{coupriePowerWatershedUnifying2011}.

\begin{figure*}[htb]
  \centering
  \includegraphics[width=\linewidth]{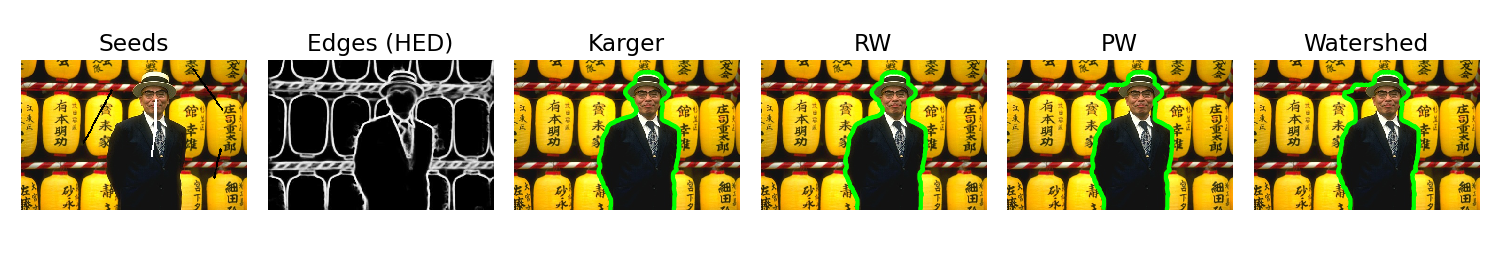}
  \includegraphics[width=\linewidth]{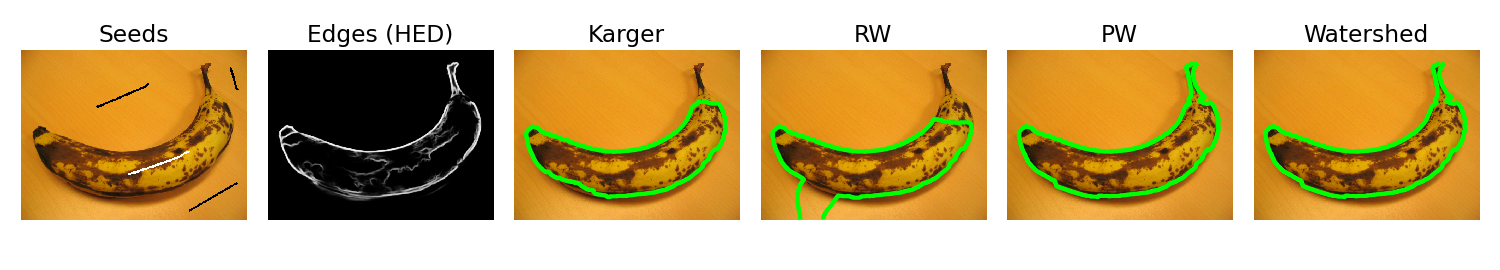}
  \caption{Some qualitative differences between Karger-type contractions, random walker / harmonic energy
    minimization and (Power) watershed. \(\beta\) values are chosen optimally
    for each method.}\label{fig:segmentations}
\end{figure*}

\begin{table}
  \centering
  \begin{tabular}{l l l l}
    \toprule
    & \(\uparrow\) ARI
    & \(\uparrow\) Acc [\%]
    & \(\downarrow\) VoI \\
      \midrule
    Contraction & \(0.82 \pm 0.02\) & \(96.3 \pm 0.6\) & \(0.24 \pm 0.02\) \\
    RW & \(0.82 \pm 0.02\) & \(96.0 \pm 0.6\) & \(0.25 \pm 0.03\) \\
    Watershed & \(0.83 \pm 0.02\) & \(96.2 \pm 0.5\) & \(0.24 \pm 0.02\) \\
    Power WS & \(0.83 \pm 0.02\) & \(96.2 \pm 0.5\) & \(0.24 \pm 0.02\)\\
    \bottomrule
  \end{tabular}
  \caption{Seeded segmentation: Mean adjusted Rand index (ARI), accuracy (Acc)
    and variation of information (VoI) on the Grabcut dataset. RW = Random
    Walker, Power WS = Power Watershed with \(q = 2, p \to \infty\)}
  \label{tab:grabcut}
\end{table}

Table~\ref{tab:grabcut} shows the results on the entire Grabcut dataset. We optimized
\(\beta\) by hand separately for each method and used the optimal values \(\beta = 10\) for
the Karger potential and \(\beta = 20\) for the random walker. However, the performance of both
algorithms is relatively stable within this range of values. The watershed algorithm does not
depend on the value of \(\beta\), as long as \(\beta > 0\). For Power Watershed, we used
\(\beta = 10\) (though its dependency on \(\beta\) is very low anyway).
The reported error is the standard error
of the mean over the dataset. We used 1000 runs of the seeded segmentation algorithm to
approximate the Karger potential, which made the approximation error negligible in comparison.

We compare the four methods using the Adjusted Rand Index (ARI), their classification accuracy and the Variation of Information (VoI).
For ARI and accuracy, higher is better, for VoI, lower is better. All metrics are calculated only over the
unlabeled pixels.

Figure~\ref{fig:segmentations} shows an example of the seeds that were used, the output of the
edge detection network and the resulting segmentations for each of the four methods, each at their optimal \(\beta\) values.
The results are for the most part very similar -- the segmentations shown here have been selected because
they are visibly different. In the first row, there are many strong edges and
the (Power) watershed follows a different edge
than the other methods. In the second row, some edges are missing and the four methods respond differently
to this \enquote{leak}.

The output of the contraction algorithm is only an approximation of the true Karger potential but the error is so small that it
does not visibly affect the contours of the segmentation.

\paragraph{Semi-supervised learning}
Here, we used classical benchmark data from the training set of the USPS handwritten digits
dataset~\cite{hull1994,lecun1990}.
These are labeled \(16 \times 16\) grayscale images of digits from 0 to 9. We calculated all pairwise euclidean distances between the
images and built the 10-nearest neighbors graph based on those. The graph weights were again computed using a radial basis
function,
\begin{equation}
  w_{ij} = \exp\left(-\beta \frac{d_{ij}^2}{a^2}\right)\,,
\end{equation}
with \(a := \max_{\{i, j\} \in E} d_{ij}\), where \(d_{ij}\) are the euclidean distances.

We used random subsets of different sizes as labeled vertices and left the remaining vertices to be labeled.
For each size of the labeled set, we sampled 20 sets. Table~\ref{tab:usps} shows the accuracies over
unlabeled data, averaged over these 20 samples. The errors are the standard errors of the sample mean.
As before, the \(\beta\) values were chosen individually
for each method to maximize performance (\(\beta = 5\) for the random walker,
\(\beta = 2\) for the contraction method).
\begin{table}
  \setlength\tabcolsep{4pt}
  \centering
  \scalebox{0.9}{
  \begin{tabular}{l r r r r}
    \toprule
    Seeds & 20 & 40 & 100 & 200 \\
    \midrule
    Contraction & \(\mathbf{62.2} \pm 1.8\) & \(\mathbf{73.1} \pm 1.5\) & \(\mathbf{89.0} \pm 0.5\) & \(\mathbf{92.6} \pm 0.2\)\\
    RW & \(53.7 \pm 1.9\) & \(68.0 \pm 1.2\) & \(87.7 \pm 0.7\) & \(\mathbf{92.4} \pm 0.3\) \\
    Watershed & \(54.3 \pm 2.0\) & \(58.7 \pm 1.9\) & \(74.3 \pm 1.0\) & \(80.0 \pm 0.9\) \\
    Power WS & \(54.4 \pm 2.0\) & \(59.0 \pm 1.9\) & \(74.7 \pm 1.0\) & \(80.8 \pm 0.9\) \\
    \bottomrule
   \end{tabular}}
  \caption{Graph-based semi-supervised learning: Accuracies in \% on the USPS
    dataset. RW = Random Walker, Power WS = Power Watershed with \(q = 2, p \to \infty\)}
  \label{tab:usps}
\end{table}

Throughout, we used the scikit-image implementation of the random walker~\cite{waltScikitimageImageProcessing2014}
with slight adaptations to use the edge weights described above.

\paragraph{Results}
In all our experiments, the new method based on the Karger potential performed
comparably to the random walker / harmonic energy minimization. However, the new method
has significantly better results in the semi-supervised learning setting with few labeled vertices.

\section{Conclusion}
We have shown that contraction algorithms that are continuous or that use only
local properties of the edges cannot efficiently solve the \(s\)-\(t\)-mincut
problem or the normalized cut problem to optimality. On the other hand, we have
demonstrated that certain extensions of Karger's algorithm \emph{can} be
successfully used for seeded segmentation and semi-supervised learning tasks: we
have presented a contraction-based algorithm that performs as well as or better
than the random walker / harmonic energy minimization, while having an
asymptotic time complexity linear in the number of edges.

Future work might address the question whether contraction algorithms based on global
properties can be useful for solving the \(s\)-\(t\)-mincut problem or whether our result
can be extended to an even wider class of algorithms. Another open question is whether
the \(s\)-\(t\)-contraction algorithm can find \(s\)-\(t\)-mincuts quickly on graphs that occur
in practice, as opposed to the \enquote{malicious} artificial graphs we used in the impossibility proofs.
Finally, Karger's algorithm induces a distribution over spanning 2-forests, similarly to
the distribution we describe in Section~\ref{sec:seeded_segmentation}. Future research
could shed more light on this distribution, for example whether it is uniquely well suited
for finding minimum cuts or whether a Gibbs distribution would yield a result similar
to Theorem~\ref{thm:karger_probability_bound}.

\section*{Acknowledgements}
This work is funded by the Deutsche Forschungsgemeinschaft (DFG, German Research
Foundation) under Germany's Excellence Strategy EXC 2181/1 - 390900948 (the
Heidelberg STRUCTURES Excellence Cluster).

{\small
\bibliographystyle{ieee_fullname}
\bibliography{references}
}

\onecolumn
\begin{changemargin}{2cm}{2cm}
\begin{appendices}
\section*{Appendix}
\section{Proofs for continuous contraction algorithms}
\label{app:continuous}
We will first prove theorem~\ref{thm:continuous_impossibility} for \(s\)-\(t\)-mincuts. Afterwards,
we will give those parts of the proof for the normalized cut version that differ from the
version for \(s\)-\(t\)-mincuts.

\begin{proof}[Proof of theorem~\ref{thm:continuous_impossibility} for \(s\)-\(t\)-mincuts]
  Fix a value of \(n\) and a continuous contraction algorithm, we will then construct a graph on which
  this algorithm finds an \(s\)-\(t\)-mincut with probability \(\leq 2^{-n + 3}\).

  Consider the graph \(G\) in fig.~\ref{fig:simple_counterexample} in the main paper, but with weight 1 for all edges
  and generalized to \(n\) vertices (two of which are \(s\) and \(t\)).
  We call the adjacency matrix of this graph \(A\). Every \(s\)-\(t\)-cut of \(G\) is minimal, and
  there are \(2^{n - 2}\) such \(s\)-\(t\)-cuts.
  So there must be at least one \(s\)-\(t\)-mincut, whose cut set we shall call \(C\),
  that the contraction algorithm selects with probability \(\leq 2^{-n + 2}\).

  The idea of this proof is to slightly decrease the weights of all the edges in \(C\). Because
  of continuity, we can do this in such a way that the probability of selecting \(C\) does not
  change by much. Then the cut defined by \(C\) will be the unique \(s\)-\(t\)-mincut
  in the modified graph but will still be selected with probability of order \(2^{-n}\).
  What follows is a more rigorous version of this argument.

  We will write \(p_C(\tilde{A})\) for the probability that the algorithm selects the cut \(C\)
  on the graph with \(n\) vertices and weighted adjacency matrix \(\tilde{A}\). We know
  that \(p_C(A) \leq 2^{-n + 2}\). Our goal is to find an adjacency matrix \(A'\) for which \(C\)
  is the unique \(s\)-\(t\)-mincut and \(p_C(A') \leq 2^{-n + 3}\).

  First, we need to show that for a continuous contraction algorithm,
  \(p_C(\tilde{A})\) is a continuous function of \(\tilde{A}\).
  The definition of continuous contraction algorithms only states that the scores
  at each step need to be continuous function of the adjacency matrix.
  It's unsurprising that this also leads to continuous overall probabilities
  of selecting given cuts. The details do not provide much insight and are
  shown separately as lemma~\ref{thm:continuity_contraction}.

  This continuity of \(p_C(\tilde{A})\) means that for \(\eps := 2^{-n + 2}\),
  there is a \(\delta > 0\) such that if \(\norm{A - A'}_\infty \leq \delta\) for some \(A'\), then
  \(\abs{p_C(A) - p_C(A')} \leq \eps\).

  So we define the graph \(G'\) with adjacency matrix \(A'\) by setting the weights of the edges in the cut set
  \(C\) to \(1 - \delta\) and leaving the other weights at \(1\)\footnote{We can of course assume
    \(\delta < 1\) without loss of generality}. Then \(\norm{A - A'}_\infty = \delta\),
  so
  \[\abs{p_C(A) - p_C(A')} \leq \eps = 2^{-n + 2}\]
  This means that the probability of finding the cut \(C\) in the new graph \(G'\) is
  \[p_C(A') \leq p_C(A) + \eps \leq 2^{-n + 2} + 2^{-n + 2} = 2^{-n + 3}\]

  At the same time, \(C\) is the \emph{unique} \(s\)-\(t\)-mincut of \(G'\).
  Every \(s\)-\(t\)-cut has a cut set with the same cardinality and \(C\) is the only
  one which contains only edges that have weight \(1 - \delta\) --- every other cut set
  contains some edges with weight 1.

  This means that on \(G'\), the contraction algorithm has only an exponentially low
  probability of finding \emph{any} \(s\)-\(t\)-mincut, as claimed.
\end{proof}

\begin{proof}[Proof of theorem~\ref{thm:continuous_impossibility} for normalized cuts]
  Instead of the graph from fig.~\ref{fig:simple_counterexample} in the main paper that we used
  in the previous proof,
  let \(G\) be a complete unweighted graph on \(n\) vertices. We will show that in this graph, every cut
  is a normalized cut.

  In a complete unweighted graph, we always have
  \[w(A, B) = \sum_{a\in A, b\in B} w_{ab} = |A| \cdot |B| - |A \cap B|\]
  (the last term is necessary because there are no self-loops).
  Therefore, with \(k := |A|\), we get
  \begin{align*}
    w(A, A^c) &= k(n - k) \\
    w(A, V) &= k(n - 1) \\
    w(A^c, V) &= (n - k)(n - 1)
    \end{align*}
  So the normalized cut cost of the cut \((A, A^c)\) is
  \begin{equation*}
    \begin{split}
      \operatorname{ncut}(A, A^c) &= \frac{w(A, A^c)}{w(A, V)} + \frac{w(A, A^c)}{w(A^c, V)}
      = \frac{k(n - k)}{k(n - 1)} + \frac{k(n - k)}{(n - k)(n - 1)}\\
      &= \frac{(n - k) + k}{n - 1} = \frac{n}{n - 1}
    \end{split}
  \end{equation*}
  for each of the \(2^{n - 1}\) possible cuts. This means that there are \(2^{n - 1}\) normalized cuts,
  and the algorithm must assign probability \(\leq 2^{-n + 1}\) to at least one of them.

  From here on the proof procedes like that for \(s\)-\(t\)-mincuts:
  if the weights are slightly perturbed, there will be a unique normalized cut,
  but its probability will still be close to \(2^{-n + 1}\). We therefore don't
  repeat the details.
\end{proof}

To make working with weighted adjacency matrices easier, we consider
every graph to be fully connected for the following Lemma. Non-existent edges
are instead treated as edges with weight zero.
\begin{lemma}\label{thm:continuity_contraction}
  Let \(C\) be a fixed set of edges of the complete graph on \(n\) vertices. Let \(p_C(A)\) be the probability that a given continuous
  contraction algorithm does not contract any edges from \(C\) when run on the graph with \(n\) vertices
  and weighted adjacency matrix \(A\). Then \(p_C\) is a continuous function.
\end{lemma}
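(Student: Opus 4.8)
The plan is to prove continuity of $p_C$ by induction on the number of vertices $n$, unwinding one contraction step at a time. For $n = 2$ there are no contractions, so $p_C(A)$ is either identically $0$ (if $C$ contains the single remaining edge, though in fact with only two vertices the algorithm performs no contractions at all) or identically $1$, and constant functions are continuous. For the inductive step, fix a weighted adjacency matrix $A$ on $n$ vertices. At the first contraction step the algorithm picks an edge $e = \{i,j\}$ with probability
\begin{equation*}
  q_e(A) = \frac{\mathcal{W}(e; A, s, t)}{\sum_{e'} \mathcal{W}(e'; A, s, t)}\,,
\end{equation*}
where the sum runs over all edges $e'$ of the (complete) graph. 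Conditioned on contracting $e$ — which we may only do when $e \notin C$, since otherwise we have already failed — the algorithm continues on the contracted graph $G/e$, whose weighted adjacency matrix $A^{(e)}$ is obtained from $A$ by deleting rows/columns $i,j$, adding a new row/column whose entries are the sums of the corresponding entries of $i$ and $j$. Writing $C^{(e)}$ for the image of $C \setminus \{e\}$ in $G/e$, the law of total probability gives
\begin{equation*}
  p_C(A) = \sum_{e \notin C} q_e(A)\, p_{C^{(e)}}\bigl(A^{(e)}\bigr)\,,
\end{equation*}
which expresses $p_C$ as a finite sum of products of functions of $A$.

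It then remains to check that each factor on the right is continuous in $A$. The map $A \mapsto A^{(e)}$ is a fixed linear map, hence continuous, and $p_{C^{(e)}}$ is continuous on $(n-1)$-vertex matrices by the induction hypothesis, so the composition $A \mapsto p_{C^{(e)}}(A^{(e)})$ is continuous. For $q_e(A)$, the numerator $\mathcal{W}(e; A, s, t)$ is continuous by the definition of a continuous contraction algorithm, and so is the denominator $\sum_{e'} \mathcal{W}(e'; A, s, t)$; since scores are nonnegative and the denominator is strictly positive whenever the graph still has an edge available to contract, $q_e$ is a ratio of continuous functions with nonvanishing denominator and hence continuous. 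A finite sum of finite products of continuous functions is continuous, so $p_C$ is continuous, completing the induction.

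The one point that needs care — and which I expect to be the main obstacle — is the positivity of the denominator of $q_e$, i.e.\ guaranteeing that the algorithm is always well-defined: at every step of a run that has not yet contracted an edge of $C$, at least one contractible edge must have strictly positive score. For the specific graphs used in the proof of Theorem~\ref{thm:continuous_impossibility} (the generalized graph of Fig.~\ref{fig:simple_counterexample} and small perturbations of it, as well as complete graphs) one has positive weights on all edges and the graph stays connected under contraction, so this holds; a clean way to handle it is to state the lemma for adjacency matrices $A$ in a small enough neighborhood of the relevant base matrix that all genuinely present edges keep positive weight, which is all the proof of the theorem actually uses. With that caveat the argument above goes through, and the boundary cases (a contraction that makes $C^{(e)}$ empty, or that reduces the vertex count to $2$) are absorbed into the induction base.
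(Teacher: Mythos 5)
Your proof is correct and takes essentially the same approach as the paper's: both decompose $p_C$ over contraction sequences, use that edge contraction acts as a fixed linear (hence continuous) map on the adjacency matrix, and write each step's probability as a ratio of continuous scores --- your induction on the number of vertices simply unrolls into the paper's explicit sum over ordered $(n-2)$-tuples of edges avoiding $C$. The caveat you raise about positivity of the normalizing denominator is legitimate, but the paper's proof leaves it equally implicit (it is needed anyway for the algorithm to be well-defined on the matrices under consideration), so it does not distinguish the two arguments.
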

Note that we don't require \(C\) to be a cut set and that here, \(p_C(A)\) does not always denote the probability
that \(C\) is the chosen cut. This makes the proof more concise and is a strict generalization: if \(C\) happens
to be a cut set for some adjacency matrix \(A\), then \(p_C(A)\) will be the probability that \(C\) is the
final chosen cut.
\begin{proof}
  Let \(e_1, \ldots, e_{n - 2}\) be an arbitrary but fixed set of edges. We will show that the probability \(p(e_1, \ldots, e_{n - 2})\)
  that the algorithm contracts \(e_1, \ldots, e_{n - 2}\) in that order
  is a continuous function of \(A\). Then the claim follows because \(p_C(A)\) is simply the sum of
  these probabilities over all  \((n - 2)\)-tuples of edges that don't contain edges from \(C\).
  
  We prove the claim in two steps:
  \begin{enumerate}
  \item
    Let \(C_k(A)\) for \(k \in \{1, \ldots, n - 2\}\) be the adjacency matrix that is reached from starting
    with \(A\) and contracting \(e_1, \ldots, e_k\). We will show that \(C_k(A)\) is a continuous function
    for all \(k\).
  \item We then show that \(p(e_1, \ldots, e_{n - 2})\) is a continuous function of the partially contracted
    adjacency matrices \(C_1(A), \ldots, C_{n - 2}(A)\)
  \end{enumerate}
  It will then follow that \(p(e_1, \ldots, e_{n - 2})\) and thus \(p_C\) is a continuous function of \(A\), as a composition
  of continuous functions.

  For the first step, note that contracting an edge sets an entry in the adjacency matrix to zero
  and adds its previous value to another entry. This means that each entry of \(C_1(A)\)
  is either zero (for all \(A\)) or a certain sum of entries of \(A\). The structure of the sum is determined by \(e_1\)
  and does not depend on \(A\). Therefore, \(C_1\) is a continuous function of \(A\).
  Each entry in \(C_2(A)\) is zero or a sum of entries in \(C_1(A)\) and therefore a continuous function of \(C_1(A)\),
  which makes it a continuous function of \(A\) by composition. By induction it follows that all \(C_k\)
  are continuous functions of \(A\).

  For the second step, we write \(p(e_1, \ldots, e_{n - 2})\) as
  \begin{equation}
    p(e_1, \ldots, e_{n - 2}) = \prod_{k = 1}^{n - 2} \frac{\mathcal{W}(e_k; C_k(A))}{\sum_{e}\mathcal{W}(e; C_k(A))}
  \end{equation}
  the sum over \(e\) is over all the edges at that step; which summands appear depends only on \(e_1, \ldots, e_{n - 2}\)
  and not on \(A\).
  The scores \(\mathcal{W}\) that appear are by definition continuous in their second argument \(C_k(A)\).
  Since \(C_k(A)\) is continuous in \(A\) and the entire expression is clearly continuous in the scores,
  \(p(e_1, \ldots, e_{n - 2})\) is continuous in \(A\) as claimed.
\end{proof}

\section{Proofs for local contraction algorithms}
\label{app:local}
We will first prove theorem~\ref{thm:local_impossibility} for \(s\)-\(t\)-mincuts. The approximability
result from corollary~\ref{thm:local_approximability} and the result for normalized cuts will then follow
easily.

\begin{proof}[Proof of theorem~\ref{thm:local_impossibility} for \(s\)-\(t\)-mincuts]
The graphs \(G_n\) we will use consist of \(n\) copies of each of three different
subgraphs, shown in fig.~\ref{fig:bands}. As an example, a schematic version
of \(G_2\) is shown in fig.~\ref{fig:g_2}.
Each of the colored boxes contains one subgraph, each of the three types occurs twice. The different box colors
denote the three different types. The last two types
only differ in their orientation but we will treat them separately.
The general graph \(G_n\) simply has \(n\) instead of two copies of each subgraph.

\begin{figure}
  \begin{subfigure}{\linewidth}
    \centering
    \includegraphics[width=.7\linewidth]{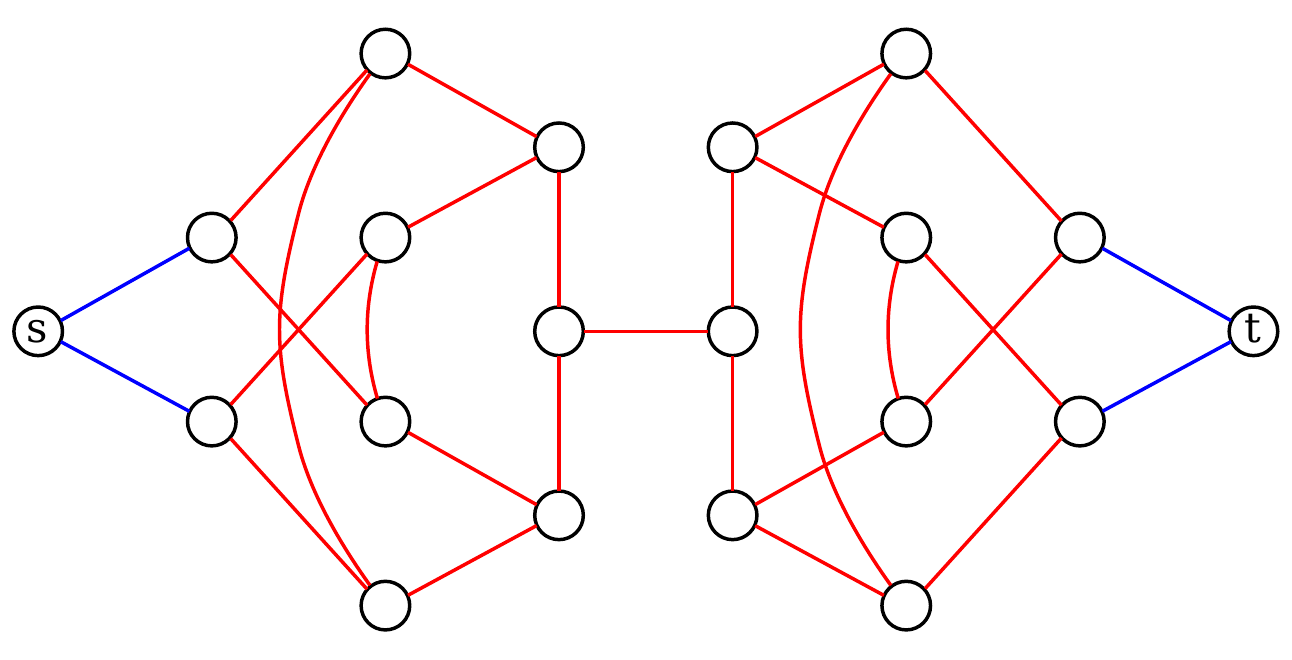}
    \caption{Band of type A}
  \end{subfigure}
  \begin{subfigure}{\linewidth}
    \centering
    \includegraphics[width=.7\linewidth]{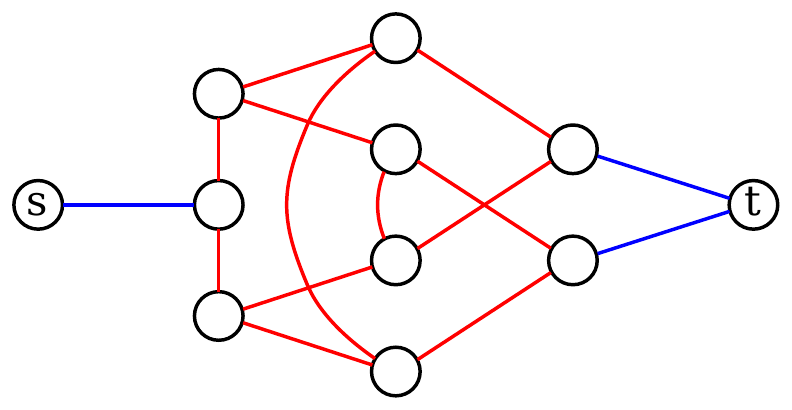}
    \caption{Band of type B}
  \end{subfigure}
  \begin{subfigure}{\linewidth}
    \centering
    \includegraphics[width=.7\linewidth]{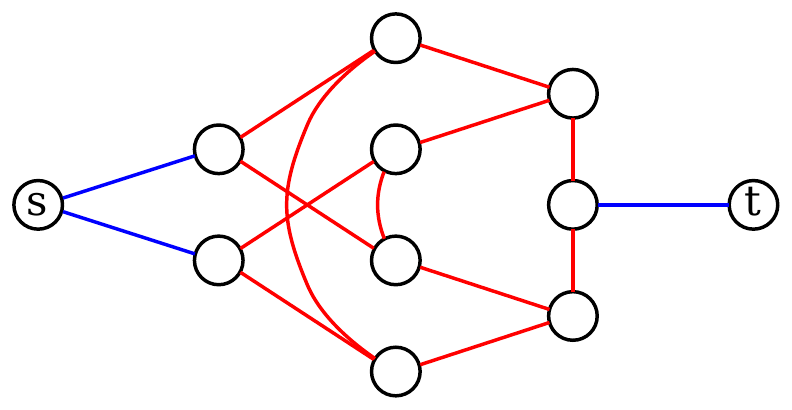}
    \caption{Band of type C}
  \end{subfigure}
  \caption{The three different \enquote{bands} used in \(G_n\)}
  \label{fig:bands}
\end{figure}

\begin{figure}[htbp]
\centering
\includegraphics[height=.85\textheight]{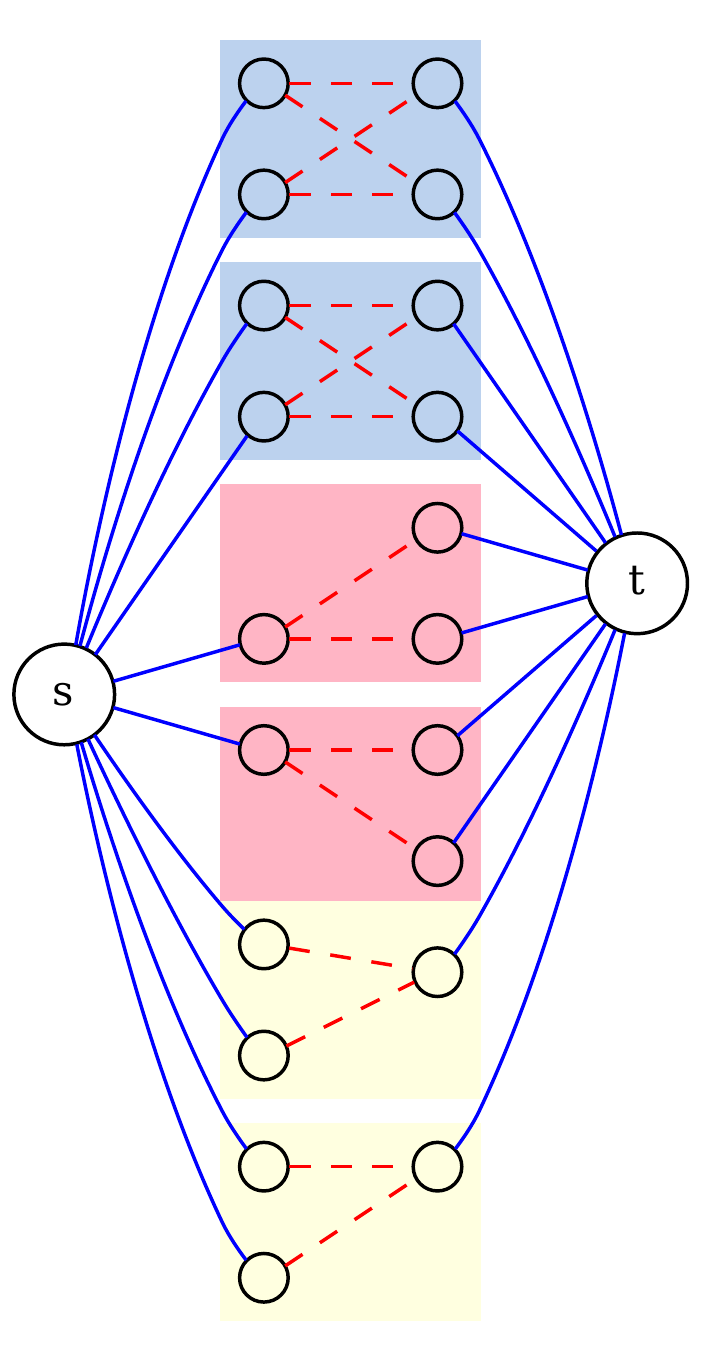}
\caption{\label{fig:g_2}The unweighted graph \(G_2\). The idea is still the same as for the simpler example from
  fig.~\ref{fig:simple_counterexample} in the main paper. But the short parallel paths between \(s\) and \(t\) have
  now been replaced by \enquote{bands}, our name for the subgraphs in colored boxes. This construction
  ensures that it is impossible to find \enquote{safe} edges for contraction based only on
  local properties. The bands are only hinted at here, the full bands are shown in fig.~\ref{fig:bands}.
  Blue corresponds to type A, red to B, yellow to C.}
\end{figure}

We call each such subgraph, including the blue edges that connect it to \(s\) and
\(t\), a \emph{band}. There are \(3n\) bands, \(n\) of each type.
We say that a band has been \emph{touched} if one of the edges belonging to it has been contracted.
Clearly, every band has to be touched at some point during the contraction process.

One of the key ideas of the proof is that \(G_n\) contains many edges that have isomorphic
neighborhoods, but some of which are part of the \(s\)-\(t\)-mincut while others are not.
A local contraction algorithm assigns the same score to each of these edges with isomorphic
neighborhoods. This will give us a lower bound on the contraction probability of edges
included in the \(s\)-\(t\)-mincut cut set in any particular step.

All of the edges in \(G_n\) belong to one of three isomorphism classes of neighborhoods,
which we call \emph{red}, \emph{\(s\)-blue} and \emph{\(t\)-blue}. These neighborhoods
are shown in fig.~\ref{fig:neighborhoods}.

\begin{figure}
  \centering
  \begin{subfigure}{.45\linewidth}
    \centering
    \includegraphics[width=\linewidth]{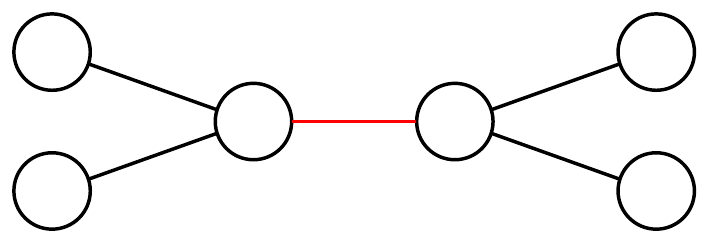}
    \caption{Red neighborhood}\label{fig:red_neighborhood}
  \end{subfigure}
  \begin{subfigure}{.45\linewidth}
    \centering
    \includegraphics[width=\linewidth]{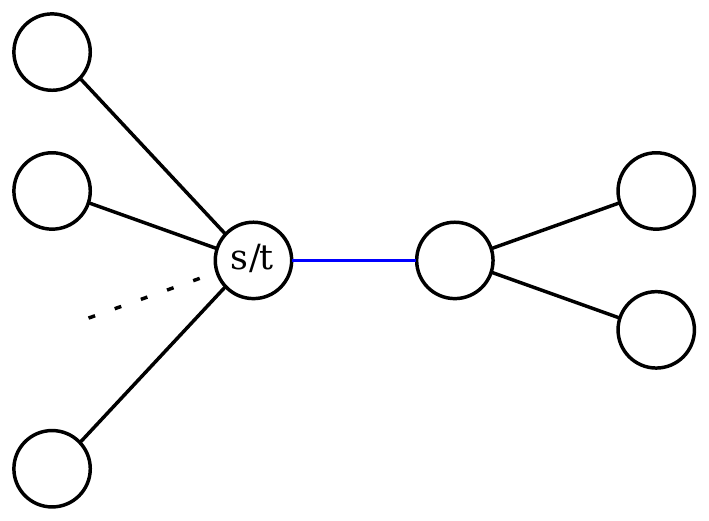}
    \caption{Blue neighborhood}\label{fig:blue_neighborhood}
  \end{subfigure}
  \caption{The different neighborhoods that occur in \(G_n\). \(s\)-blue and
    \(t\)-blue are both shown in fig.~\ref{fig:blue_neighborhood} because they differ
  only in whether they contain the node \(s\) or \(t\).}\label{fig:neighborhoods}
\end{figure}

The blue neighborhood classes depend on the degree of \(s\) (\(t\)) which
is a function of \(n\). We will call an edge \(s\)-blue (\(t\)-blue) if its neighborhood
fits the schema from fig.~\ref{fig:blue_neighborhood}, no matter the degree
of \(s\) (\(t\)). In a fixed graph, all \(s\)-blue (\(t\)-blue) edges have isomorphic
neighborhoods. That the same is not true across graphs does not matter for our
purposes.

All the edges are colored according to their neighborhood (blue and red)
in fig.~\ref{fig:bands}.
During the contraction process, other neighborhoods may of course arise.

The unique \(s\)-\(t\)-mincut of \(G_n\) cuts the red edge in the middle of each band
of type A and the blue edges on the side where there is only one of them in bands of
type B and C. The contraction algorithm will find this minimum cut iff it does not contract
any of these edges. So we will say a contraction is \enquote{wrong} or a \enquote{mistake} if it
contracts one of these edges belonging to the \(s\)-\(t\)-mincut.

We will now prove some useful statements:
\begin{enumerate}
\item In an untouched band, all edges are either red, \(s\)-blue or \(t\)-blue, with each
edge belonging to the same type as in the original graph \(G_n\).
\begin{subproof}
It's clear that contractions of red edges in one band don't influence other bands. If a blue
edge is contracted, this can change the degree of \(s\) or \(t\) but has no
influence on other bands apart from that.
\end{subproof}
\item If at most \(\frac{n}{2}\) bands have been touched, then the probability that
contracting a blue or red edge is wrong is
\(p(\text{wrong}|\text{red or blue}) \geq \frac{1}{118}\).
\begin{subproof}
There are still at least \(\frac{n}{2}\) untouched bands of all three types. Each type contains
a red edge, an \(s\)-blue edge or a \(t\)-blue edge that mustn't be contracted respectively
(as per the statement proven just above).
So there are at least \(\frac{n}{2}\) wrong red edges, \(\frac{n}{2}\) wrong \(s\)-blue
edges and \(\frac{n}{2}\) wrong \(t\)-blue edges.

Because all red edges have the same neighborhood, the local contraction algorithm
assigns the same score to all red edges. The same is true for \(s\)-blue edges and
for \(t\)-blue edges. So we have
\[p(\text{wrong}|\text{red}) = \frac{\#\text{wrong red edges}}{\#\text{red edges}}
\geq \frac{\#\text{wrong red edges}}{\#\text{total edges}} \geq \frac{n/2}{59n} = \frac{1}{118}\]
and similarly \(p(\text{wrong}|s\text{-blue}) \geq \frac{1}{118}\) and the same
for \(t\). This means that \(p(\text{wrong}|\text{red or blue})\) is bounded by
\begin{equation*}
\begin{split}
p(\text{wrong}|\text{r or b})
&= \frac{p(\text{wrong}|\text{r})p(\text{r}) + p(\text{wrong}|s\text{-b})p(s\text{-b}) 
+ p(\text{wrong}|t\text{-b})p(t\text{-b})}{p(\text{r}) + p(s\text{-b}) + p(t\text{-b})}\\
&\geq \frac{\frac{1}{118}\cdot p(\text{r}) + \frac{1}{118}\cdot p(s\text{-b}) 
+ \frac{1}{118}\cdot p(t\text{-b})}{p(\text{r}) + p(s\text{-b}) + p(t\text{-b})}\\
&= \frac{1}{118}
\end{split}
\end{equation*}
\(p(r)\), \(p(s\text{-b})\) and \(p(t\text{-b})\) are what can be influenced by the
choice of the scoring function \(\mathcal{W}\) but these terms cancel as we can see.
\end{subproof}
\end{enumerate}

We will now prove inductively that contracting an edge in \(k\) different bands
without contracting any wrong edges happens with probability \(\leq \left(\frac{117}{118}\right)^k\)
for \(k \leq \frac{n}{2}\):
\[p_k := p\left(\text{no mistakes} \mid k \text{ bands touched}\right) \leq \left(\frac{117}{118}\right)^k\]

\begin{subproof}
\begin{description}
\item[\(k = 0\)] Nothing to show.

\item[\(k \to k + 1\)] The probability of not making any mistakes until \(k + 1\) bands
  have been touched is the probability \(p_k\) of correctly touching the first \(k\)
  bands times the probability of not making a mistake while touching the final band.

  From statement 1 proven above, we know that to touch a new band, a red, \(s\)-blue
  or \(t\)-blue edge will have to be contracted at some point. From statement 2 we know
  that the probability of making a mistake on that single contraction is at least \(\frac{1}{118}\).
  Additional contractions may be made, but they cannot decrease the total probability of making any mistake.
  So
  \[p_{k + 1} \leq \left(\frac{117}{118}\right)^k \cdot \frac{117}{118} = \left(\frac{117}{118}\right)^{k + 1}\]
  which proves the claim for \(k + 1\).
\end{description}
\end{subproof}

Since all bands have to be touched eventually, we can apply this statement with
\(k = \frac{n}{2}\).
So the success probability is at most \(\left(\frac{117}{118}\right)^{n/2}\) which is
exponentially low in the number of vertices, \(36n + 2\), as claimed.
\end{proof}

\begin{proof}[Proof of corollary~\ref{thm:local_approximability}]
  Since the \(s\)-\(t\)-mincut has cost \(3n\), an \(\alpha\)-minimal cut may be worse
  than the mincut by at most \(3n(\alpha - 1)\).
  Every wrong contraction in an untouched band increases the cost of the best cut that is still
  possible by at least 1 (because there is only one unique way to optimally
  cut each band). So to find an \(\alpha\)-minimal \(s\)-\(t\)-cut, at most \(3n(\alpha - 1)\)
  wrong contractions may be made in untouched bands.

  As there are \(3n\) bands, at least \(3n - 3n(\alpha - 1) = (6 - 3\alpha)n\) contractions
  in different bands must be made without mistakes.

  We showed in the proof of theorem~\ref{thm:local_impossibility} that making contractions in
  \(k\) different bands without mistakes (with \(k \leq \frac{n}{2}\)) happens with probability
  \(\leq \left(\frac{117}{118}\right)^k\). If \(\alpha < 2\), then \(6 - 3\alpha > 0\),
  and therefore we can apply this result\footnote{If \((6 - 3\alpha)n > \frac{n}{2}\), we just use \(k = \frac{n}{2}\)}
  with \(k = (6 - 3\alpha)n\) and see that the
  probability of correctly contracting edges in the required number of bands is exponentially
  low in \(n\).

  Therefore, the probability that mistakes are made in only \(3n(\alpha - 1)\) bands is
  exponentially low, and thus also the probability of finding an \(\alpha\)-minimal \(s\)-\(t\)-cut.
\end{proof}

\begin{proof}[Proof of theorem~\ref{thm:local_impossibility} for normalized cuts]
It suffices to show that the \(s\)-\(t\)-mincut in \(G_n\) is also the normalized cut
for large \(n\). The \(s\)-\(t\)-mincut cuts \(3n\) edges. Because the partitions are
perfectly balanced in terms of internal edge weights, only cuts that cut fewer edges
than that can have a lower normalized cut cost. In particular, any such cut could not
separate \(s\) and \(t\). One of its partitions could therefore be no larger than
one of the bands between \(s\) and \(t\). But the normalized cut cost of such a cut
approaches 1 for large \(n\), whereas the ncut cost of the \(s\)-\(t\)-mincut is always
\(2 \cdot \frac{3n}{2 \cdot 28n + 3n} = \frac{6n}{59n} = \frac{6}{59}\)\footnote{
  Each partition has \(28n\) internal edges and there are \(3n\) edges in the cut between partitions}.
So the \(s\)-\(t\)-mincut is indeed also the normalized cut for large \(n\).
\end{proof}
\section{Implementation of the seeded contraction algorithm}
For unweighted graphs, Karger's algorithm can be implemented in \(\mathcal{O}(m)\) time
as follows~\cite{kargerNewApproachMinimum1996}:
First, a random permutation of all \(m\) edges is generated which takes \(\mathcal{O}(m)\) time.
Afterwards, edges are contracted in the chosen order until only two
vertices remain. If an edge has already been removed by previous contractions, it is skipped.
\cite{kargerNewApproachMinimum1996} also describes how this method can be generalized to weighted graphs. The only
change is in how to generate the permutation of edges to give different probabilities to
different permutations.

To keep track of when to stop and of the current segmentation at each step, we use
a union-find data structure. Keeping this structure up to date increases the runtime
to \(\alpha(n)\mathcal{O}(m)\)~\cite{tarjanWorstcaseAnalysisSet1984}
where \(n\) is the number of vertices and \(\alpha(n)\) the inverse Ackermann function.
But since \(\alpha(n) < 5\) for all practical values of \(n\),
this theoretical increase has no practical relevance.

Two modifications are necessary to adapt this implementation to the seeded contraction
algorithm: first, we initialize the union-find data structure such that nodes with the same
seed are in the same cluster from the beginning. This is possible with a linear scan over
all nodes in \(\mathcal{O}(n)\).

Second, we keep a boolean vertex property updated
that denotes whether a node is already labeled (i.e.\ in the cluster of a seed node) or not.
Whenever we come to an edge connecting two nodes that are already labelled, we skip it
instead of merging these nodes. This ensures that no seeds with different labels end up
in the same cluster and each node has a well-defined label at the end. These extensions
do not increase the runtime of processing one edge beyond \(\mathcal{O}(1)\), so the
total runtime of the algorithm stays \(\mathcal{O}(m\alpha(n))\).

\section{Details on experiments}
\subsection{Metrics}
We used three common metrics to evaluate performance in the Grabcut experiment:
the adjusted Rand index (ARI), variation of information (VoI) and accuracy.

The (unadjusted) Rand index is defined as the accuracy on the space of pairs of
samples, in the following sense: count the number TP of pairs of samples that are
correctly put in the same cluster (\emph{true positives}) and the number TN of pairs
of samples that are correctly put in different clusters (\emph{true negatives}).
Then the Rand index is \(\frac{TP + TN}{{n \choose 2}}\) for \(n\) samples,
where \({n \choose 2}\) is the total number of pairs of samples.

The adjusted Rand index renormalizes the Rand index such that it is 1 for a
perfect clustering and has expected value 0 for a random clustering, independently of the number
of clusters. This is done by correcting for chance with
\[\text{ARI } = \frac{\text{RI } - \mathbb{E}[\text{RI}]}{1 - \mathbb{E}[\text{RI}]}\]
where RI is the Rand index and the expectation is over permutations of the assigned labels.
The expected value of 0 makes the ARI easy to interpret compared to the unadjusted Rand index,
for which even random clusterings can have a positive expected value.

The variation of information between two variables \(X\) and \(Y\) can be defined as
\[\operatorname{VI}(X; Y) := H(X|Y) + H(Y|X)\]
where \(H(X|Y)\) is the conditional Shannon entropy. To get a clustering metric, we
let \(X\) and \(Y\) be the different labelings (in our case ground truth and the labeling to be evaluated).
The joint distribution over \(X\) and \(Y\) is defined by picking samples uniformly at
random.

\subsection{Parameters}
In both experiments, we optimized the \(\beta\) parameter by hand to maximize the performance
according to the metrics we used separately for each method. In the Grabcut experiment multiple
metrics were used, but they all reached their maximum at the same \(\beta\) value of those
we tested.

In both experiments, we first found a reasonable range of \(\beta\) values and then tested ten different
values within these ranges. For the Grabcut experiment, this range was \(\beta = 5\) to \(\beta = 100\),
for the USPS experiment it was \(\beta = 1\) to \(\beta = 20\).

\subsection{Empirical runtimes}
For a complete run on both datasets (Grabcut and USPS), the new Karger-based
algorithm takes about 20 minutes\footnote{with 100 runs, enough for a reasonably
good approximation of the true potential}, the random walker takes about 5
minutes and watershed 9 minutes.
All of these are wall clock times when running on 6 CPU cores. The random walker
implementation is the one used in SciPy, the other algorithms were implemented
by us in Julia. Their runtimes could probably be decreased with a more efficient
implementation.

\end{appendices}
\end{changemargin}
\end{document}